  \providecommand\BibTeX{{%
    \normalfont B\kern-0.5em{\scshape i\kern-0.25em b}\kern-0.8em\TeX}}}
\newcommand{\removelatexerror}{\let\@latex@error\@gobble}
\newcommand{\veryshortarrow}[1][3pt]{\mathrel{%
   \vcenter{\hbox{\rule[-.5\fontdimen8\textfont3]{#1}{\fontdimen8\textfont3}}}%
   \mkern-4mu\hbox{\usefont{U}{lasy}{m}{n}\symbol{41}}}}
\newcommand{\dotarrow}{\cdotp\veryshortarrow}
\theoremstyle{definition}
\newtheorem{definition}{Definition}[section]
\theoremstyle{remark}
\newtheorem{remark}{Remark}[section]
\theoremstyle{property}
\newtheorem{property}{Property}[section]
\newcommand{\stitle}[1]{\vspace{0.5ex} \noindent{{\bf #1}}}
\newcommand{\sstitle}[1]{\vspace{0.5ex} \noindent{\textit{ #1}}}
\newcommand{\kw}[1]{{\ensuremath {\mathsf{#1}}}\xspace}
\newcommand{\kwnospace}[1]{{\ensuremath {\mathsf{#1}}}}
\newcommand{\ei}{\end{itemize}}
\newcommand{\ee}{\end{enumerate}}
\newcommand{\beqn}{\begin{eqnarray*}}
\newcommand{\eeqn}{\end{eqnarray*}}
\newcounter{ccc}
\newcommand{\eat}[1]{}
\long\def\comment#1{}
\newcommand{\page}{\kw{HUGE}}
\newcommand{\pagewco}{\kwnospace{HUGE-}\kw{WCO}}
\newcommand{\pageseed}{\kwnospace{HUGE-}\kw{SEED}}
\newcommand{\pagebenu}{\kwnospace{HUGE-}\kw{BENU}}
\newcommand{\pagerads}{\kwnospace{HUGE-}\kw{RADS}}
\newcommand{\pageeh}{\kwnospace{HUGE-}\kw{EH}}
\newcommand{\pagegf}{\kwnospace{HUGE-}\kw{GF}}
\newcommand{\edgejoin}{\kwnospace{Edge}\kw{Join}}
\newcommand{\ttjoin}{\kwnospace{Twin}\kwnospace{Twig}\kw{Join}}
\newcommand{\starjoin}{\kwnospace{Star}\kw{Join}}
\newcommand{\multiwayjoin}{\kwnospace{Multiway}\kw{Join}}
\newcommand{\seed}{\kw{SEED}}
\newcommand{\bigjoin}{\kwnospace{BiG}\kw{Join}}
\newcommand{\psgl}{\kw{PSgL}\xspace}
\newcommand{\cbf}{\kw{CBF}\xspace}
\newcommand{\benu}{\kw{BENU}}
\newcommand{\rads}{\kw{RADS}}
\newcommand{\eh}{\kwnospace{Empty}\kw{Headed}}
\newcommand{\gf}{\kwnospace{Graph}\kw{Flow}}
\newcommand{\oom}{\texttt{OOM}\xspace}
\newcommand{\timeout}{\texttt{OT}\xspace}
\newcommand{\reffig}[1]{Figure~\ref{fig:#1}}
\newcommand{\refsec}[1]{Section~\ref{sec:#1}}
\newcommand{\reftable}[1]{Table~\ref{tab:#1}}
\newcommand{\refalg}[1]{Algorithm~\ref{alg:#1}}
\newcommand{\refeq}[1]{Equation~\ref{eq:#1}}
\newcommand{\reflem}[1]{Lemma~\ref{lem:#1}}
\newcommand{\refrem}[1]{Remark~\ref{rem:#1}}
\newcommand{\refex}[1]{Example~\ref{ex:#1}}
\newcommand{\id}{\mathrm{ID}}
\newcommand{\nbrs}{\texttt{GetNbrs}\xspace}
\newcommand{\steal}{\texttt{StealWork}\xspace}
\newcommand{\scan}{\texttt{SCAN}\xspace}
\newcommand{\scannospace}{\texttt{SCAN}}
\newcommand{\extend}{\texttt{PULL-EXTEND}\xspace}
\newcommand{\extendnospace}{\texttt{PULL-EXTEND}}
\newcommand{\join}{\texttt{PUSH-JOIN}\xspace}
\newcommand{\sink}{\texttt{SINK}\xspace}
\newcommand{\sinknospace}{\texttt{SINK}}
\newcommand{\lrbu}{\texttt{LRBU}\xspace}
\newcommand{\lru}{\texttt{LRU}\xspace}
\newcommand{\cinsert}{\texttt{Insert}\xspace}
\newcommand{\creserve}{\texttt{Seal}\xspace}
\newcommand{\cfree}{\texttt{Release}\xspace}
\newcommand{\cget}{\texttt{Get}\xspace}
\newcommand{\ccontains}{\texttt{Contains}\xspace}
\newcommand{\dtgo}{\textsf{GO}\xspace}
\newcommand{\dtlj}{\textsf{LJ}\xspace}
\newcommand{\dtor}{\textsf{OR}\xspace}
\newcommand{\dtuk}{\textsf{UK}\xspace}
\newcommand{\dteu}{\textsf{EU}\xspace}
\newcommand{\dtfs}{\textsf{FS}\xspace}
\newcommand{\dtcw}{\textsf{CW}\xspace}
\newcommand{\unit}{\mathcal{U}\xspace}
\newcommand{\ord}{\mathcal{O}\xspace}
\newcommand{\alg}{\mathcal{A}\xspace}
\newcommand{\comm}{\mathcal{C}\xspace}
\newcommand{\idx}{\kw{Idx}\xspace}
\newcommand{\wopt}{\kwnospace{wco}\xspace}
\newcommand{\leaves}{\mathcal{L}\xspace}
\begin{document}


\title{HUGE: An Efficient and Scalable Subgraph Enumeration System (Complete Version)}


\author{Zhengyi Yang}
\affiliation{\institution{The University of New South Wales}}
\email{zyang@cse.unsw.edu.au}

\author{Longbin Lai}
\affiliation{\institution{Alibaba Group}}
\email{longbin.lailb@alibaba-inc.com}

\author{Xuemin Lin}
\affiliation{\institution{The University of New South Wales}}
\email{lxue@cse.unsw.edu.au}

\author{Kongzhang Hao}
\affiliation{\institution{The University of New South Wales}}
\email{khao@cse.unsw.edu.au}

\author{Wenjie Zhang}
\affiliation{\institution{The University of New South Wales}}
\email{zhangw@cse.unsw.edu.au}

\fancyhead{}

\begin{abstract}
Subgraph enumeration is a fundamental problem in graph analytics, which aims to find all instances of a given query graph on a large data graph. In this paper, we propose a system called \page to efficiently process subgraph enumeration at scale in the distributed context.
\page features 1) an optimiser to compute an advanced execution plan without the constraints of existing works; 2) a hybrid communication layer that supports both pushing and pulling communication; 3) a novel two-stage execution mode with a lock-free and zero-copy cache design, 4) a BFS/DFS-adaptive scheduler to bound memory consumption, and 5) two-layer intra- and inter-machine load balancing. 
\page is generic such that all existing distributed subgraph enumeration algorithms can be plugged in to enjoy automatic speed up and bounded-memory execution.
\end{abstract}

\maketitle

\section{Introduction}
\label{sec:intro}
Subgraph enumeration is a fundamental problem in graph analytics that aims to find all subgraph instances of a data graph that are isomorphic to a query graph. Subgraph enumeration is associated with a lot of real-world applications \cite{protein,biomolecular,chemical,recommend,www,net-analy,motif,GPAR,qa-rdf,wang2019vertex,StructSim}. 
Additionally, it is a key operation when querying graph databases such as Neo4j \cite{neo4j}, and also plays a critical role in graph pattern mining (GPM) systems \cite{Arabesque,AutoMine,Fractal,Peregrine}. 

With the growth in graph size nowadays \cite{ubiquity} and the NP-hardness \cite{np-complete} of subgraph enumeration, researchers have devoted enormous efforts into developing efficient and scalable algorithms in the distributed context \cite{crystaljoin,wco-join,twin-twig,edge-join,star-join,seed,psgl,RADS,BENU}. 


\begin{table}[]
\caption{Results of a square query over the \dtlj graph in a local 10-machine cluster, including total time ($T$), computation time ($T_R$), communication time ($T_C = T - T_R$), total data transferred ($C$), and peak memory usage ($M$) among all machines.}
\label{tab:intro_exp}
\small
\setstretch{0.9}

\begin{tabular}{|c|c|c|c|c|c|c|}
\hline
Comm. Mode & Work & \begin{tabular}[c]{@{}c@{}}$T$(s)\end{tabular} & \begin{tabular}[c]{@{}c@{}}$T_R$(s)\end{tabular} & \begin{tabular}[c]{@{}c@{}}$T_C$(s)\end{tabular} & \begin{tabular}[c]{@{}c@{}}$C$(GB)\end{tabular} & \begin{tabular}[c]{@{}c@{}}$M$(GB)\end{tabular}\\ \hline
\multirow{2}{*}{Pushing} & \seed & $1536.6$ & $343.2$ & $1193.4$ & $537.2$ & $42.3$ \\ \cline{2-7} 
 & \bigjoin & $195.9$ & $122.1$ & $73.8$ & $534.5$ & $14.3$ \\ \hline 
\multirow{2}{*}{Pulling} & \benu & $4091.7$ & $3763.2$ & $328.5$ &  $25.3$ & $1.3$ \\ \cline{2-7} 
 & \rads & $2643.8$ & $2478.7$ & $165.1$  & $452.7$ & $19.2$ \\ \hline
\textbf{Hybrid} & \begin{tabular}[c]{@{}c@{}}\page\end{tabular} & $52.3$ & $51.5$ & $0.8$ & $4.6$ & $2.2$ \\ \hline
\end{tabular}
\vspace*{-2\baselineskip}
\end{table}

\stitle{Motivations.} 
The efficiency and scalability of distributed subgraph enumeration are jointly determined by three perspectives: \emph{computation}, \emph{communication} and \emph{memory management} \cite{patmat-exp, RADS}. 
However, existing works \cite{seed, wco-join, BENU, RADS} \footnote{We mainly discuss four representative works here, while the others are in \refsec{related}.} fail to demonstrate satisfactory performance for all three perspectives. 
To verify, we conduct an initial experiment by running the square query ($\bm{\square}$) over the popular benchmark graph \dtlj \cite{patmat-exp}. 
The results\footnote{The results may differ from the original reports of \benu and \rads because we use better implementations of the join-based algorithms \cite{patmat-exp} (\refsec{exp}).} are shown in \reftable{intro_exp}.

\seed \cite{seed} and \bigjoin \cite{wco-join} are join-based algorithms that adopt the \emph{pushing} communication mode, which communicates by sending data from the host machine to remote destinations. 
In general, \seed processes subgraph enumeration via a series of binary joins, each joining the matches of two sub-queries using the conventional hash join. \bigjoin \cite{wco-join} follows the worst-case optimal (\wopt) join algorithm \cite{Ngo-join}, which extends the (intermediate) results one vertex at a time by intersecting the neighbours of all its connected vertices. Both algorithms are scheduled in a breadth-first-search (BFS) order \cite{crystaljoin} in order to fully utilize the parallel computation, which in turn requires materializing and transferring (via pushing) enormous intermediate results. Such design choices can cause high tension on both communication and memory usage, as shown in \reftable{intro_exp}.

While noticing the enormous cost from pushing communication, \benu \cite{BENU} and \rads \cite{RADS} exploit a \textit{pulling} design. \benu has been developed to pull (and cache) graph data from a distributed key-value store (e.g. Cassandra \cite{cassandra}). On each machine, it embarrassingly parallelises a sequential depth-first-search (DFS)-based program \cite{ullmann} to compute the matches. Such pulling design substantially reduces \benu's communication volume, which, however, does not shorten its communication time accordingly. The main culprit is the large overhead of pulling (and accessing cached) data from the external key-value store. 
Additionally, while the use of DFS strategy results in low memory consumption, it can suffer from low CPU utilisation \cite{dfs}. The above shortages jointly reduce the computing efficiency of \benu. To support a more efficient pulling design, \rads has developed its own compute engine without external dependency. Observe that the matches of a star (a tree of depth) rooted on a vertex can be enumerated from its neighbours \cite{twin-twig}. Instead of transferring the intermediate results, the join that involves a star can be computed locally after pulling to the host machine the remote vertices with their neighbours. However, to facilitate such a pulling design, \rads is coupled with a \starjoin-like \cite{star-join} execution plan that has already been shown to be sub-optimal \cite{seed, wco-join}, which leads to poor performance of \rads in all perspectives.

\stitle{Challenges.} We distil three impact factors that jointly affect the three perspectives of distributed subgraph enumeration, namely \emph{execution plan}, \emph{communication mode}, and \emph{scheduling strategy}.

\sstitle{Execution plan.} Existing works derive their ``optimal'' execution plans, while none can guarantee the best performance by all means, as evidenced by \cite{patmat-exp} and the results in \reftable{intro_exp}. The main reason is that these works achieve optimality in a rather \emph{specific} context subject to the join algorithm and communication mode. For example, \seed is optimal among the hash-join-based algorithms \cite{star-join, twin-twig, seed}, while \bigjoin's optimality follows the \wopt-join algorithm. The optimal plan of \rads is computed specifically for its pulling-based design. We argue that an optimal execution plan should lie in a more generic context without the constraints of existing works, which clearly makes it challenging to define and compute.

\sstitle{Communication mode.} While pulling mode can potentially reduce communication volume, it is non-trivial to make it eventually improve overall performance. Regarding design choice, it is not an option to blindly embrace the pulling design, as \rads has practised, without considering its impact on the execution plan. Regarding implementation, it is infeasible to directly utilise an external infrastructure that can become the bottleneck, as \benu has encountered.

\sstitle{Scheduling strategy.} Although DFS strategy has small memory requirement, it can suffer from low network and CPU utilisation. To saturate CPU power (parallelism), BFS strategy is more widely used for distributed subgraph enumeration. However, it demands a large memory to maintain enormous intermediate results. \emph{Static} heuristics such as batching \cite{wco-join} and region group \cite{RADS} are used to ease memory tension by limiting the number of initially matched (pivot) vertices/edges. Nevertheless, such static heuristics all lack in a tight bound and can perform poorly in practice. In our experiment (\refsec{exp}), we have observed out-of-memory errors from the static heuristics, even while starting with one pivot vertex/edge.  

\stitle{Our Solution and Contributions.} 
In this paper, we take on all aforementioned challenges
by presenting a system called \page, short for pushing/pulling-\textbf{H}ybrid s\textbf{U}b\textbf{G}raph \textbf{E}numeration system. Specifically, we make the following contributions:

\sstitle{(1) Advanced execution plan.} We study to break down an execution plan of subgraph enumeration into the logical and physical aspects. Logically, we express all existing works \cite{twin-twig,edge-join,star-join,seed,wco-join, BENU, RADS} in a uniform join-based framework. As a result, these works can be readily plugged into \page to enjoy automatic performance improvement. Physically, we carefully consider the variances of join algorithms (hash join and \wopt join) and communication modes (pushing and pulling) for better distributed join processing. As a result, we are able to search for an optimal execution plan to minimise both communication and computation cost in a more generic context without the constraints of existing works. 


\sstitle{(2) Pushing/pulling-hybrid compute engine.} As the generic execution plan may require both pushing and pulling communication, we develop a hybrid compute engine that efficiently supports dual communication mode. Communication-wise, the dual-mode communication allows the runtime to use either pushing or pulling communication based on which mode produces less cost (according to the optimal plan). As a result, \page can benefit from substantially reduced communication volume, as can be seen from \reftable{intro_exp}, where \page renders the smallest communication volume of $4.6$GB, and the lowest communication time of $0.8$s. Computation-wise, while noticing that cache is the key to efficient pulling-based computation, we devise a new cache structure called \emph{least-recent-batch used} (\lrbu) cache. Together with a two-stage execution strategy, we achieve \emph{lock-free} and \emph{zero-copy} cache access with small synchronisation cost. Additionally, a two-layer intra- and inter-machine work-stealing mechanism is employed for load balancing. Overall, these techniques contribute to \page's superior performance. As shown in \reftable{intro_exp}, \page outperforms \seed, \bigjoin, \benu and \rads by 29.4$\times$, 3.7$\times$, 78.2$\times$, 50.6$\times$, respectively. 

\sstitle{(3) BFS/DFS-adaptive scheduler.} To manage memory usage without sacrificing computing efficiency, we introduce a BFS/DFS-adaptive scheduler to dynamically control the memory usage of subgraph enumeration. It adopts BFS-style scheduling whenever possible to fully leverage parallelism and adapts dynamically to DFS-style scheduling if the memory usage exceeds a constant threshold. With the scheduler, we prove that \page achieves a tight memory bound of $O(|V_q|^2\cdot D_G)$ for a subgraph enumeration task, where $|V_q|$ is the number of query vertices and $D_G$ is the maximum degree of the data graph. As a result, \page uses only slightly more memory than \benu (purely DFS scheduling) in \reftable{intro_exp}, while achieving the best performance among the competitors.

\sstitle{(4) In-depth experiment.} We conduct extensive experiments on 7 real-world graphs. Results show the effectiveness of our techniques. To highlight, \page outperforms previously best pulling-based algorithm by up to $105\times$, and the best join-based algorithm by up to $14\times$, with considerably much less communication and memory usage.

\stitle{Paper Organization.} The rest of this paper is organized as follows. \refsec{pre} introduces preliminaries. \refsec{framework} presents \page's optimiser. 
We present implementation details of \page in \refsec{impl} and how computation is scheduled in \page to achieve bounded-memory execution in \refsec{schedule}. We discuss the potential applications of \page in \refsec{applications}. Empirical evaluations are in \refsec{exp}, followed by related work in \refsec{related} and conclusion in \refsec{conclusion}.
\section{Preliminaries}
\label{sec:pre}

\stitle{Graph Notations.} We assume both the data graph and query graph are unlabelled, undirected, and connected\footnote{Our techniques can seamlessly support directed and labelled graph.}. A graph is a tuple $g = (V_g, E_g)$, where $V_g$ is the vertex set and $E_g \subseteq V_g \times V_g$ is the edge set of $g$. For a vertex $\mu \in V_g$, we use $\mathcal{N}_g(\mu)$ to denote the neighbours of $\mu$, and $d_g(\mu) = |\mathcal{N}_g(\mu)|$ to denote the degree of $\mu$. The average and maximum degree of $g$ is denoted as $\overline{d_g}$ and $D_g$, respectively. 
Each vertex $v\in V_g$ is assigned with an unique integer ID from $0$ to $|V_g|-1$ denoted as $\id(v)$. A \emph{star}, denoted as $(v; \leaves)$, is a tree of depth 1 with $v$ as the root and $\leaves$ as the leaf vertices. 
A \emph{subgraph} $g'$ of $g$, denoted $g' \subseteq g$, is a graph such that $V_{g'}\subseteq V_g$ and $E_{g'}\subseteq E_g$. A subgraph $g'$ is an \emph{induced subgraph} of $g$ if and only if $\forall \mu,\mu' \in V_{g'}, e=(\mu,\mu')\in E_g$ it holds that $e\in E_{g'}$. We denote $g = g_1 \cup g_2$ for merging two graphs, where $V_g = V_{g_1} \cup V_{g_2}$ and $E_g = E_{g_1} \cup E_{g_2}$.

\stitle{Subgraph Enumeration.} Two graphs $q$ and $g$ are \emph{isomorphic} if and only if there exists a bijective mapping $f:V_q \rightarrow V_g$ such that $\forall (v,v') \in E_q, (f(v),f(v')) \in E_g$. Given a query graph $q$ and a data graph $G$, the task of \emph{subgraph enumeration} is to enumerate all subgraphs $g$ of $G$ such that $g$ is isomorphic to $q$. Each isomorphic mapping from $q$ to $g$ is called a \emph{match}. 
By representing the query vertices as $\{v_1,v_2, \dots ,v_n\}$, we can simply denote a match $f$ as $\{u_{k_1}, u_{k_2}, \ldots, u_{k_n}\}$, where $f(v_i) = u_{k_i}$ for $1 \leq i \leq n$. We call a subgraph $q'$ of $q$ a \emph{partial query}, and a match of $q'$ a \emph{partial match}. Given a query graph $q$ and data graph $G$, we denote the result set of subgraph enumeration as $\mathbb{R}_G(q)$, or $\mathbb{R}(q)$ if it is clear. 

As a common practice, we apply the method of \emph{symmetry breaking} \cite{symmetry-breaking} to avoid duplicated enumeration caused by automorphism (an isomorphism from a graph to itself).

\comment{
\stitle{Symmetry Breaking.} . Specifically, we assign a total order, denoted as $\prec$, among all vertices in the data graph $G$ and a partial order, denoted as $<$, among some pairs of vertices in the query graph $q$. An \emph{order-preservation constraint} is therefore enforced in the match to break symmetry. We relabel the IDs of the data vertices, such that the smaller-degree vertex is assigned with smaller ID (tie breaks by the original IDs). Given two data vertices $v,v' \in V_G$, we have $v \prec v'$ if and only if $\id(v)<\id(v')$.
}

\stitle{Graph Storage.} We \emph{randomly} partition a data graph $G$ in a distributed context as most existing works \cite{patmat-exp,seed,twin-twig,wco-join,crystaljoin}. For each vertex $\mu \in V_G$, we store it with its adjacency list $(\mu;\mathcal{N} (\mu))$ in one of the partitions. We call a vertex that resides in the local partition as a \emph{local vertex}, and a \emph{remote vertex} otherwise. 

\stitle{Ordered Set.} An \emph{ordered set} is a pair $\hat{S}=(S,Ord)$, where $S$ is a set and $Ord$ is the corresponding map of ordering, which maps each element in S to an integer. For $s_1,s_2\in \hat{S}$, we say $s_1\leq s_2$ if and only if $Ord(s_1)\leq Ord(s_2)$. Besides, we use $\min(\hat{S})$ and $\max(\hat{S})$ to denote an element in $\hat{S}$ with the smallest and largest order, respectively.

\stitle{Remote Procedure Call.} A \emph{remote procedure call} (RPC) \cite{rpc} is when a computer program calls a procedure to execute in a different address space. We refer to the caller as \emph{client} and the executor as \emph{server}. 
The form of request–response interaction allows RPC to be naturally adopted for pulling communication.

\section{Advanced Execution Plan}
\label{sec:framework}
In this section, we first show that existing works can fit uniformly into a \emph{logical} join-based framework. Then we discuss two primary physical settings for distributed join processing. We eventually propose a dynamic-programming-based optimiser to compute the optimal execution plan for subgraph enumeration.   


\subsection{A Logical Join-based Framework}
\label{sec:plan}
It is known that subgraph enumeration can be expressed as a multi-way join of some basic structures called join units (e.g. edges, stars) \cite{seed}. Given a query graph $q$ and a data graph $G$, and a sequence of join units $\{q_1, q_2, \ldots q_k\}$, such that $q = q_1 \cup q_2 \cup \cdots q_k$, we have
\begin{equation}
    \small
    \label{eq:common_join}
    \mathbb{R}_G(q) = \mathbb{R}_G(q_1) \Join \mathbb{R}_G(q_2) \Join \cdots \Join \mathbb{R}_G(q_k).
\end{equation}


Logically speaking, existing works all solve the above join via multiple rounds of \emph{two-way} joins, with the variances in join unit ($\unit$) and join order ($\ord$). For simplicity, we represent a two-way join $\mathbb{R}(q') = \mathbb{R}(q'_l) \Join \mathbb{R}(q'_r)$ as a 3-tuple $(q', q'_l, q'_r)$. The join order is an ordered sequence of two-way joins $(q', q'_l, q'_r)$ (where $q', q'_l, q'_r \subseteq q$), with its last element being $(q, q_l, q_r)$. 

\starjoin \cite{star-join} pioneers the idea of using stars as the join unit, 
as well as the left-deep join order $\ord_{ld}$, in which it requires that $q'_r$ is a join unit for each $(q',q'_l,q'_r) \in \ord_{ld}$. \seed \cite{seed} further allows using clique (a complete graph), in addition to stars, as the join unit, after maintaining extra index (triangle index). Moreover, \seed replaces the prior practice of left-deep join order with bushy join, which removes the constraint that each $q'_r$ is a join unit, and hence covers a more complete searching space for an optimal execution plan. 

\stitle{BiGJoin} We uncover the connections between \bigjoin \cite{wco-join} and the join-based framework as follows. \bigjoin is based on the \wopt join algorithm \cite{Ngo-join}. It matches the query graph one vertex at a time in a predefined order. Let the matching order be $V_q=\{v_1,v_2,\dots,v_n\}$. The execution plan starts from an empty set, and computes the matches of $\{v_1,\dots,v_i\}$ in the $i^{\mathrm{th}}$ round. Let a partial match after the $i^{\mathrm{th}}$ round be $p=\{u_{k_1}, u_{k_2}, \ldots, u_{k_i}\}$ for $i < n$, \bigjoin expands the results in the $(i+1)^{\mathrm{th}}$ round by matching $v_{i+1}$ with $u_{k_{i+1}}$ for $p$ if and only if $\forall_{1\leq j\leq i}(v_j,v_{i+1})\in E_q, (u_{k_j},u_{k_{i+1}})\in E_G$. The candidate set of $v_{i+1}$, denoted as $\mathbb{C}(v_{i+1})$ can be computed by the following intersection
\begin{equation}
    \small
    \label{eq:wco}
    \mathbb{C}(v_{i + 1}) = \cap_{\forall_{1 \leq j \leq i} \land (v_j, v_{i + 1}) \in E_q} \mathcal{N}_G(u_{k_j}). 
\end{equation}


\begin{definition}
\label{def:complete_star_join}
A two-way join $(q', q'_l, q'_r)$ is a \emph{complete star join} if and only if $q'_r$ is a star $(v'_r; \leaves)$ (w.l.o.g. \footnote{Note that as join operation is commutative, the condition also applies to $q'_l$. Thereafter, we always present $q'_r$ without loss of generality (w.l.o.g.) in this paper.}) and $\leaves \subseteq V_{q'_l}$.
\end{definition}

We show how \bigjoin can be expressed in the join-based framework. Let $q_{i}' = q_1 \cup \cdots \cup q_{i}$. The procedure of \bigjoin is equivalent to the joins following the left-deep order $\ord_{ld}$, where the $i^\mathrm{th}$ element of $\ord_{ld}$ is $(q'_{i+1}, q'_i, q_{i+1})$, and it further satisfies that $q_{i}'$ is an \emph{induced} subgraph of $q$, and $(q'_{i+1}, q'_i, q_{i+1})$ is a complete star join with $q_{i+1} = (v_{i+1}; \leaves_{i+1})$.

\begin{figure*}[ht]
    \centering
    \begin{subfigure}[b]{.08\textwidth}
            \centering
            \includegraphics[width=.9\textwidth]{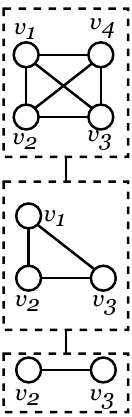}
            \caption{}
            \label{fig:4-clique-bigjoin}
    \end{subfigure} %
    ~
    \begin{subfigure}[b]{.24\textwidth}
            \centering
            \includegraphics[width=.9\textwidth]{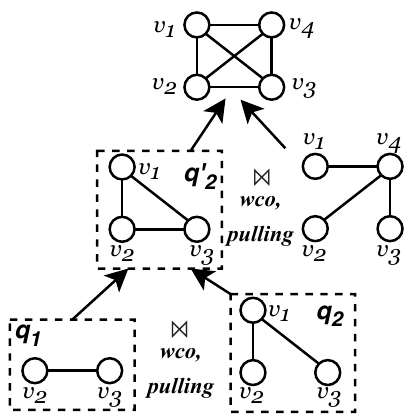}
            \caption{}
            \label{fig:4-clique-execution-plan}
    \end{subfigure} %
    ~
    \begin{subfigure}[b]{.16\textwidth}
            \centering
            \includegraphics[width=.9\textwidth]{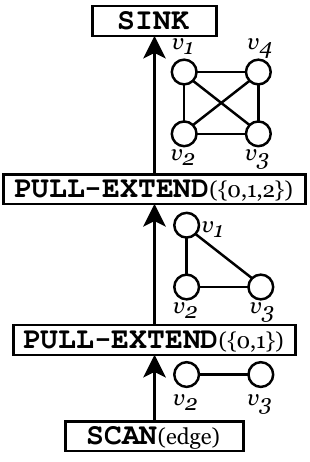}
            \caption{}
            \label{fig:4-clique-dataflow}
    \end{subfigure} %
    ~
    \begin{subfigure}[b]{.2\textwidth}
            \centering
            \includegraphics[width=.9\textwidth]{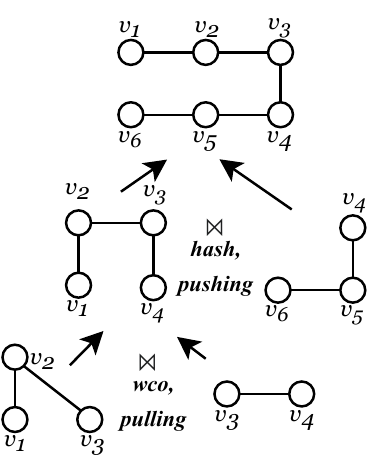}
            \caption{}
            \label{fig:5-path-execution-plan}
    \end{subfigure} %
    ~
    \begin{subfigure}[b]{.32\textwidth}
            \centering
            \includegraphics[width=.9\textwidth]{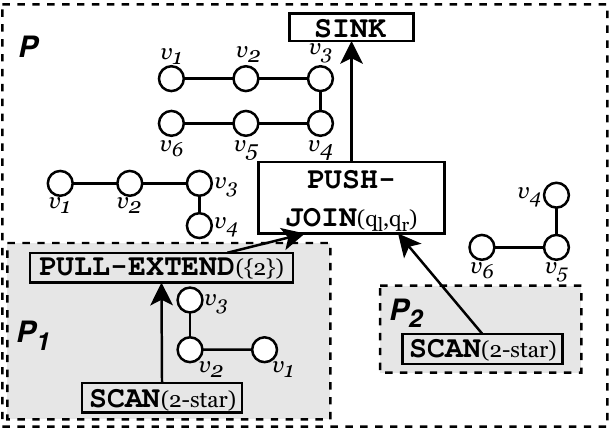}
            \caption{}
            \label{fig:5-path-dataflow}
    \end{subfigure} %
    \caption{Example execution plans and dataflow graphs, where (a) the \bigjoin plan of 4-clique; (b) the execution plan by \refalg{dp-opt} of (a); (c) the dataflow graph of (b); (d) an execution plan by \refalg{dp-opt} of 5-path; (e) the dataflow graph of (d). }
    \label{fig:example_plans}
\end{figure*}

\begin{example}
Given a \bigjoin's execution of a 4-clique in \reffig{4-clique-bigjoin}, we present its corresponding execution plan in \page in\reffig{4-clique-execution-plan}, where each vertex extension is expressed as a complete star join. As an example, the extension of $v_1$ from an edge $(v_2, v_3)$ is equivalent to the complete star join of $(q'_2, q_1, q_2)$. 
\end{example}

\stitle{BENU.} \benu stores the whole graph data in a distributed key-value store (e.g. Cassandra \cite{cassandra}). After pulling and caching required data locally, \benu runs a depth-first-search (DFS) -based subgraph isomorphism program (e.g. \cite{ullmann} in each machine). The program matches the query vertices along with the DFS tree, and checks the connections between the currently visited vertex and the already matched vertices. This is equivalent to \bigjoin's \wopt-join procedure with the DFS order as matching order and thus fits into the join-based framework.

\stitle{RADS.} \rads employs a multi-round ``star-expand-and-verify paradigm''. It first expands the partial results by a star rooted at one of the already matched vertices in each round. This is logically equivalent to joining the partial results with a star $(v; V_f)$ where $v$ must belong to the join key. Verification is then applied to filter out infeasible results based on edge verification index. This step is logically equivalent to joining the partial results with data edges (i.e. 1-star).  As a result, \rads fits into the join-based framework.

\subsection{Physical Join Processing}
\label{sec:physical_join}

Given the join-based framework, the performance of subgraph enumeration is further determined by how the join is physically processed. Here, we consider two physical settings for distributed join processing, namely, join algorithm ($\alg$) and communication mode ($\comm$). Let an arbitrary join be $(q', q'_l, q'_r)$.

\stitle{Join Algorithm.} While distributed join is well studied \cite{join-exp, Ngo-join, multi-join}, we focus on the algorithms adopted for subgraph enumeration. A distributed hash join algorithm is the foundation of \cite{star-join, psgl, twin-twig, seed, RADS}. Given $\mathbb{R}(q'_l)$ and $\mathbb{R}(q'_r)$, hash join typically shuffles $\mathbb{R}(q'_l)$ and $\mathbb{R}(q'_r)$ based on the join key of $V_{q'_l} \cap V_{q'_r}$. Thus, hash join needs to fully materialize both $\mathbb{R}(q'_l)$ and $\mathbb{R}(q'_r)$, which can be wasteful as only a part of $\mathbb{R}(q'_l)$ and $\mathbb{R}(q'_r)$ can produce feasible results. In the case that the above join is a \emph{complete star join}, the more efficient \wopt join algorithm can be used by processing the intersection in \refeq{wco}. Instead of blindly materializing the data for join, \wopt join can benefit from the worst-case optimal bound \cite{Ngo-join} to only materialize necessary data. 

\stitle{Communication Mode.} 
It is straightforward to process the distributed join in the pushing mode. For hash join, we shuffle $\mathbb{R}(q'_l)$ and $\mathbb{R}(q'_r)$ by \emph{pushing} the results to the remote machines indexed by the join key. 
For \wopt join with $p_r = (v'_r; \leaves)$, we push each $f \in \mathbb{R}(q'_l)$ to the remote machine that owns $f(v)$ continuously for each $v \in \leaves$ to process the intersection. In certain join scenario, we may leverage the pulling communication mode to process the join, in which a host machine rather \emph{pull}s the graph data than \emph{push}es the intermediate results. We have the following observation:

\begin{property}
\label{ppt:pulling}
The pulling communication can be adopted if $q'_r$ is a star $(v'_r; \leaves)$, and the join satisfies one of the following conditions: (C1) $v'_r \in V_{q_l}$; and (C2) the join is a complete star join.
\end{property}

Let $f$ be a match of $q'_l$, and $u'_r = f(v'_r)$. Regarding C1, after pulling $\mathcal{N}_G(u'_r)$ from the machine that owns $u'_r$, the join can be locally processed with the matches of $q'_r$ (rooted on $u'_r$) enumerated as $|\leaves|$-combinations over $\mathcal{N}_G(u'_r)$ \cite{twin-twig}; regarding C2, while \wopt join must be used, after pulling $\mathcal{N}_G(f(v))$ for all $v \in \leaves$ from a remote machine, the intersection (\refeq{wco}) can be locally computed. 

\begin{remark}
\label{rem:pushing_pulling}
Given a join$(q', q'_l, q'_r)$, in the pushing mode, we need to transfer data of size $|\mathbb{R}(q'_l)| + |\mathbb{R}(q'_r)|$ in the case of hash join, and $\overline{d_G} |\mathbb{R}(q'_l)|$ in the case of \wopt join. In the pulling mode, however, we pull \emph{at most} the whole graph data for each machine (i.e. $|E_G|$). Since the size of intermediate results (e.g. $\mathbb{R}(q'_l)$ and $\mathbb{R}(q'_r)$) is usually order-of-magnitude larger than the data graph itself in subgraph enumeration \cite{BENU,AGM,patmat-exp}, pulling can potentially benefit from reduced communication.
\end{remark}

\begin{table}[]
    \small
    \setstretch{0.95}

    \centering
    \caption{Existing works and their execution plans.}
    \begin{tabular}{|l||l|l||l|l|}
    \hline
     & \multicolumn{2}{c||}{Logical} & \multicolumn{2}{c|}{Physical} \\ \hline
      Existing Work & $\unit$ & $\ord$ & $\alg$ & $\comm$  \\
      \hline\hline
        \starjoin \cite{star-join} & star & left-deep & hash join & pushing \\  \hline 
         \seed \cite{seed} & star \& clique & bushy & hash join & pushing \\ \hline 
         \bigjoin \cite{wco-join} & star (limited\footnotemark) & left-deep & \wopt join & pushing \\ \hline 
         \benu \cite{BENU} & star (limited) & left-deep & \wopt join & pulling \\ \hline
         \rads \cite{RADS} & star & left-deep & hash join & pulling \\ \hline 
    \end{tabular}
    \label{tab:existing_works}
\end{table}

\footnotetext{It only accepts limited form of stars as discussed before.}

\subsection{Optimal Execution Plan}
\label{sec:optimal_plan}
We summarize existing works and their execution plans in \reftable{existing_works}, it is clear that they are subject to specific settings of join algorithm and communication mode. 
To pursue an optimal execution plan in a more generic context, we break down an execution plan $P$ for subgraph enumeration into the logical settings of join unit ($\unit$) and join order ($\ord$), as well as physical settings of join algorithm ($\alg$) and communication mode ($\comm$). Specially, we call $L = (\unit, \ord)$ as the logical plan. We detail the settings of \page as follows. By default, we use stars as the join unit, as our system does not assume any index data. 
We use the bushy join order \cite{bushy-join} as it covers more complete searching space. Given an arbitrary join of $(q', q'_l, q'_r)$, we configure its physical settings according to \refsec{physical_join} as:
\begin{equation}
    \small
    \label{eq:settings}
    (\alg, \comm)=\left\{
    \begin{array}{ll}
        (\text{\wopt join, pulling}), & \text{if it is a complete star join},  \\
        (\text{hash join, pulling}), & \text{if } q'_r \text{ is a star } (v'_r; \leaves) \land v'_r \in V_{q_l},  \\
        (\text{hash join, pushing}), & \text{otherwise.}
    \end{array} \right.
\end{equation}

We are now ready to present \refalg{dp-opt} to compute the optimal execution plan for \page with the aim of minimizing both computation and communication cost. 

\begin{figure}[t!]
\removelatexerror
\begin{algorithm*}[H]
    \small
    \setstretch{0.9}
    
    \SetKwFunction{cost}{Cost}
    \SetKwFunction{settings}{ConfigureJoin}
    \SetKwFunction{recover}{RecoverJoinOrder}
        $M_{plan}\leftarrow \{\}, M_{cost}\leftarrow \{\}$ \;
        \For{$n\leftarrow 3 \dots |V_q|$}{
            \ForAll{connected subgraph $q' \subseteq q\ s.t.\ |V_{q'}| = n$}{
                \textbf{if} $q'$ is a join unit \textbf{then} $M_{cost}[q'] \leftarrow |\mathbb{R}(q')|$ \;
                \textbf{else} \ForAll{connected subgraphs $q'_l,q'_r \subset q'\ s.t.\ q'_l\cup q'_r=q' \land E_{q'_l} \cap E_{q'_r} = \emptyset$}{
                        $C \leftarrow M_{cost}[q_l'] + M_{cost}[q_r'] + |\mathbb{R}(q')|$ \;
                        \If {$(q', q'_l,  q'_r)$ applies pulling by \refeq{settings} }{
                            $C \leftarrow C + k |E_G|$  \;
                        }
                        \textbf{else}  $C \leftarrow C + |\mathbb{R}(q'_l)| +  |\mathbb{R}(q'_r)|$ \;
                        \If {$M_{cost}[q'] = \emptyset$ or $M_{cost}[q'] > C$} {
                            $M_{cost}[q'] \leftarrow C$ ;
                            $M_{plan}[q'] \leftarrow (q'_l, q'_r)$ \;
                        }
                    }
            }
        }
        $O \leftarrow \recover(M_{plan})$ \;
        $S \leftarrow \settings(O)$ \;
    
    \Return{$(\ord, S)$}
    
  \caption{\kw{OptimalExecutionPlan}($q$).}
  \label{alg:dp-opt}
\end{algorithm*}
\end{figure}

The optimiser starts by initializing two empty maps $M_{plan}$ and $M_{cost}$. Given a sub-query $q'$, $M_{plan}$ stores the mapping from $q'$ to be best-so-far join that produces $q'$, and $M_{cost}$ records the corresponding cost (line~1). Note that we are only interested in the non-trivial case where $q$ is not a join unit. The program goes through the searching space from smaller sub-queries to larger ones (line~2). For a sub-query $q'$, no join is needed if it is a join unit, and we record its computation cost as $|\mathbb{R}(q')|$ (line~4) that can be estimated using the method such as \cite{seed,graphflow,g-care}. Otherwise, the optimiser enumerates all pairs of sub-queries $q'_l$ and $q'_r$ that can be joined to produce $q'$ (line~5). The cost of processing the join is computed in line~7-9, which consists of the cost of processing $q'_l$ and $q'_r$, the computation cost of $q'$ that is $|\mathbb{R}(q')|$, and the communication cost of the join. If pulling mode is configured, the communication cost is at most $k |E_G|$ (line~8), where $k$ is the number of machines in the cluster (\refrem{pushing_pulling}); otherwise, the cost is equal to the shuffling cost of $q'_l$ and $q'_r$, that is $|\mathbb{R}(q'_l)| + |\mathbb{R}(q'_r)|$ (line~9). If $q'$ has not been recorded in $M_{cost}$, or the recorded cost is larger than the current cost $C$, the new cost and join will be updated to the corresponding entries (line~11). Finally, the optimiser recovers the join order $\ord$ from $M_{plan}$ and configures the physical settings according to \refeq{settings} for each join in $\ord$ (line~12). 


\begin{example} 
\label{ex:optimal-plan}
\reffig{4-clique-execution-plan} illustrates the optimal execution plan for the 4-clique. In \reffig{5-path-execution-plan}, we further show the optimal execution plan of a 5-path. The two joins are processed via pulling-based \wopt join and pushing-based hash join, respectively, which demonstrates the need of both pushing and pulling communication. Note that such a plan reflects the works \cite{empty-headed, graphflow} that mix hash join and \wopt join in a hybrid plan space \cite{wco-join}. Nevertheless, these works are developed in a sequential context where computation is the only concern, while we target the distributed runtime that further considers the best communication mode. In the experiment, we show that our optimal execution plan renders better performance than \cite{empty-headed,graphflow}.
\end{example}

 
 
\begin{remark}
\label{rem:plug-in-exist}
With the separation of logical and physical settings, we allow users to directly feed existing logical plans into the optimiser, and the optimiser will only configure the physical settings for each join. Even with the same logical plan, we shall see from the experiment (\refsec{exp}) that \page achieves much better performance due to the optimal settings of join algorithm and communication mode, together with the other system optimisations to be introduced. In this sense, \emph{existing works can be plugged into \page via their logical plans to enjoy immediate speedup and bounded memory consumption}.
\end{remark}

\section{The \page Compute Engine}
\label{sec:impl}

Most existing works have been developed on external big-data engines such as Hadoop \cite{hadoop} and Timely dataflow engine \cite{naiad}, or distributed key-value store such as HBase \cite{hbase} and Cassandra \cite{cassandra}. 
Big-data engines typically do not support pulling communication. Distributed key-value store, however, lacks support of pushing communication, and can become the bottleneck due to large overhead. Thus, they cannot be adopted to run the execution plan in \refsec{framework} that may require both pushing and pulling communication. 

We implement our own pushing/pulling-hybrid compute engine for \page. In this section, we introduce the architecture of the engine, the dataflow computation model, and the distributed join processing, especially the novel pulling-based extend operator. 


\subsection{Architecture}
\label{sec:arch}
\page adopts a shared-nothing architecture in a $k$-machine cluster. There launches a \page runtime in each machine as shown in \reffig{arch}. We briefly discuss the following components, while leaving \emph{Cache} and \emph{Scheduler} to \refsec{pull} and \refsec{schedule}, respectively.

\stitle{RPC Server:} RPC server is used to answer incoming requests from other machines. The server supports two RPCs - \nbrs and \steal. \nbrs takes a list of vertices as its arguments and returns their neighbours. Note that the requested vertices must reside in the current partition. \steal can steal unprocessed tasks locally and send them to a remote machine for load balancing.
	
\stitle{RPC Client:} An RPC client establishes connections with other machines to handle RPC communication. RPC requests will be sent through RPC client whenever RPCs 
are called locally.

\stitle{Router:} The router pushes data to other machines. It manages TCP streams connected to remote machines, with a queue for each connection. The data placed in the queue will be transferred to the corresponding machine based on its routing index (e.g. join keys).

	
\stitle{Worker:} Upon starting, the \page runtime initializes a worker pool containing certain number of workers. While an operator is scheduled to run, it will be assigned to the worker pool and executed by all workers to perform the de-facto computation. Each worker has access to the local partition of the graph, RPC client and the in-memory cache. If requesting a local vertex, it will return its neighbours from the local partition. Otherwise, it either returns the cached value if any, or sends an RPC request through the RPC client to obtain the neighbours, caches them, and returns the neighbours. 


\begin{figure}[t]
    \centering
    \includegraphics[width=0.42\textwidth]{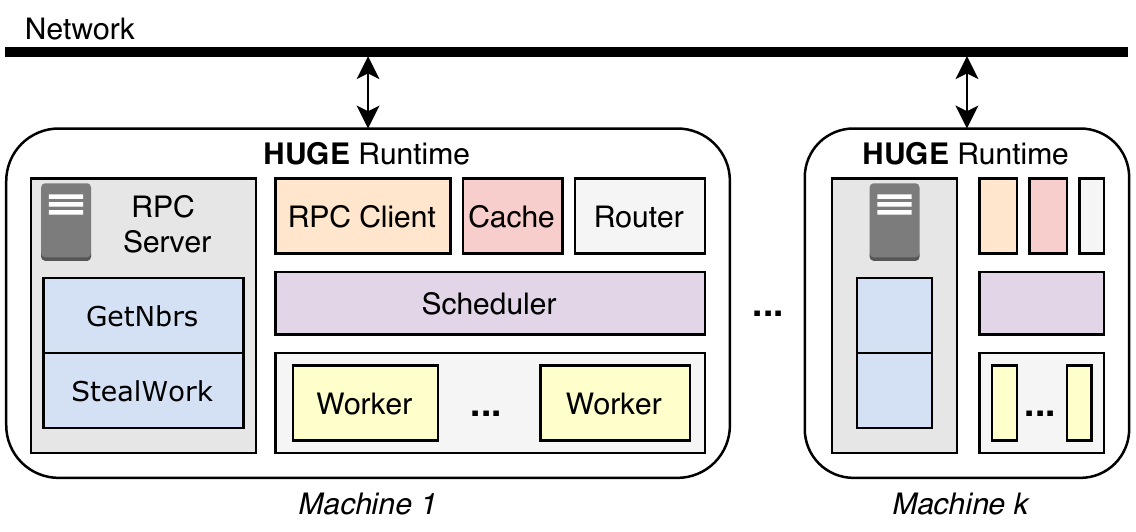}
    \caption{\page Architecture}
    \label{fig:arch}
\end{figure}


\subsection{Dataflow Model}
\label{sec:opt}
We adopt the popular dataflow model \cite{naiad, tensorflow} for \page, where computation is abstracted as a dataflow graph. A dataflow graph is a directed acyclic graph (DAG), in which each vertex is an operator, and the directed edges represent data flows. 
An \textit{operator} is the basic computing unit to run in \page, consisted of a predefined computing instruction, a certain number of inbound channels and one outbound channel. The computation of an operator is driven by receiving data from the inbound channels. Given two operators $O_1$ and $O_2$ that are connected in the dataflow graph, the data produced from the outbound channel of $O_1$ will be fed to one of the inbound channels of $O_2$. We call $O_1$ the \emph{precursor} of $O_2$, and $O_2$ the \emph{successor} of $O_1$. Upon receiving a dataflow, \page distributes it to each machine to drive the computation. 

We introduce four primitive operators necessary to understand this paper, namely \scan, \sink, \join and \extend. More operators can be added to \page to support complex analytical tasks \cite{neo4j, G-Miner} (discussed in \refsec{applications}). A valid dataflow must start from a \scan operator and end with a \sink operator. $\scan(q')$ accepts a join unit $q'$ as its parameter, takes a local partition of the data graph $G$, iterates over the partition, and outputs the matches of $q'$ in the partition. \sink is used to consume the results of subgraph enumeration, via either counting or writing to external I/O devices (e.g. disk). We introduce the semantics of \join and \extend here, and leave the detailed implementation to \refsec{join} and \refsec{pull}. 

\sstitle{PUSH-JOIN.} $\join(q'_l, q'_r)$ processes the pushing-based hash join (\refeq{settings}) of $(q', q'_l, q'_r)$. It configures two inbound channels for the partial results of $\mathbb{R}(q'_l)$ and $\mathbb{R}(q'_r)$ from the precursors. It shuffles (via pushing) $\mathbb{R}(q'_l)$ and $\mathbb{R}(q'_r)$ based on the \emph{join key} of $V_{q'_l} \cap V_{q'_r}$, and then compute the results using local join algorithm. 

\sstitle{PULL-EXTEND.} $\extend(Ext)$ accepts a parameter of \emph{extend index} $Ext=\{d_1,d_2,\dots,d_j\}$. For each input data that is a partial result $f = \{u_1, u_2, \ldots, u_i\}$, the operator extends $f$ by one more vertex as $f' = \{u_1, u_2, \ldots, u_i, u_{i+1}\}$, where the set of possible $u_{i+1}$ is computed as $\bigcap_{k = 1}^{j} \mathcal{N}_G(f[d_k])$. Each $\mathcal{N}_G(f[d_k])$, if not present in local machine, will be \emph{pulled} from the machine that owns $f[d_k]$. 

\begin{figure}[t!]
\removelatexerror
\begin{algorithm*}[H]
    \small
    \setstretch{0.9}
    
    \SetKwFunction{getExt}{GetExt}
    
    $M_{OP}\leftarrow \{\}$; $D\leftarrow \{\}$ \;
    \ForEach{$(q',q'_l,q'_r)$ in $O$}{
        $(\mathcal{A},\mathcal{C})\leftarrow$ the physical setting of $(q',q'_l,q'_r)$ \;
        \If{$\mathcal{C}$ is pushing}{ \tcp{pushing-based hash join}
            $OP_{q'}\leftarrow$\join$(q'_l,q'_r)$ \;
            \ForEach{$q''$ in $q'_l,q'_r$}{
                \If{$q''$ is a join unit}{
                    $D+=\{$\scan$(q'')\rightarrow OP_{q'}\}$ \;
                }
                \Else{
                    $D+=\{M_{OP}[q'']\rightarrow OP_{q'}\}$ \;
                }
            }
        }
        \Else{ \tcp{$\mathcal{C}$ is pulling}
            \If{$\mathcal{A}$ is wco join}{ \tcp{pulling-based wopt join}
                $ext\leftarrow$\getExt{$q'_l,q'_r$} \;
                $OP_q'\leftarrow$\extend$(ext)$ \;
                \If{$q'_l$ is a join unit}{
                    $D+=\{$\scan$(q'_l)\rightarrow OP_{q'}\}$ \;
                }
                \Else{
                    $D+=\{M_{OP}[q'_l]\rightarrow OP_{q'}\}$ \;
                }
            }
            \Else{ 
                \tcp{pulling-based hash join see \refsec{batch}}
                ...
            }
        }
        
        $M_{OP}[q']\leftarrow OP_{q'}$ \;
        
    }
    
    $D+=\{M_{OP}[q]\rightarrow$ \sink$\}$\;
    
    \Return{$(D)$}
    
  \caption{\kw{ExecutionPlanTranslation}($P$).}
  \label{alg:plan-to-df}
\end{algorithm*}
\vspace{-1.6\baselineskip}
\end{figure}

\stitle{Execution Plan Translation.} The \page engine will automatically translate an execution plan given by \refalg{dp-opt} into a dataflow graph. The algorithm is shown in \refalg{plan-to-df}. 
Firstly, in line~1, we initialise an empty map $M_{OP}$ to store the mapping of partial queries to its operator, and an empty dataflow graph $D$. 
\scan operators are installed for each join unit in the execution plan (line~8 and 16), and a \sink operator is added to consume the final results (line~22). Moreover, a pulling-based wopt join and pushing-based hash join (\refeq{settings}) are translated into a \extend and \join operator, respectively. 
For pulling-based hash join, we will show in \refsec{batch} how it will be translated via a series of \extend operators for bounded-memory execution.

\begin{example}
The execution plan in \reffig{4-clique-execution-plan} is translated into the dataflow presented in \reffig{4-clique-dataflow}, in which each pulling-based \wopt join is directly translated to a \extend operator. Similarly, the dataflow of \reffig{5-path-execution-plan} is given in \reffig{5-path-dataflow}, in which the top pushing-based hash join is translated into a \join operator. The \scan and \sink operators are added accordingly for computing the join units (stars) and consuming the final results.
\end{example}


\stitle{Overview of Distributed Execution.}  
In the distributed context, each operator's input data is partitioned to each machine and get processed in parallel. The \scan operator directly reads from the data graph that follows the graph partitioning strategy (\refsec{pre}). The \join operator takes two inputs, which will be hash-partitioned according to the join key. As for \extend and \sink operators, their input data are also the output data of their precursors and are hence partitioned. 

As a common practice of big data engines \cite{spark,storm,naiad}, each operator in \page will process a certain number of data as a \emph{batch} at a time. Thus, a batch of data serves as the minimum data processing unit. Without causing ambiguity, when we present ``an operator processes a batch of data'', we mean that each worker in a machine handles one share 
of the batch in parallel. A barrier is used to guarantee that all workers in a machine are running the same operator to process the same batch of data at any time. Due to load skew, different machines may run different operators unless explicit global synchronisation is enforced. We resolve such load skew via work stealing (\refsec{load_balance}). Depending on the scheduling strategy, the operator will consume certain (at least one) batches of input data in each run. If needed by a remote machine, the output data from an operator will be organised in batches and delegated to the router; otherwise, the data will be placed in the designated buffer to be further processed by the successor as the input.

\subsection{PUSH-JOIN Operator}
\label{sec:join}
The \join operator in \page performs distributed hash-join that shuffles the intermediate results according to the join key. Similar to \cite{hadoop,patmat-exp,MapReduce}, we implement a \emph{buffered distributed hash join}.
It shuffles the intermediate results (via \page's router) with the common join key to the same machine, buffers the received data either in memory or on the disk, and then locally compute the join.

The buffer stage can prevent the memory from being overflowed by either branch of data. We configure a constant buffer threshold, and once the in-memory buffer is full for either branch of the join, we conduct an \emph{external merge sort} on the buffered data via the join keys, and then spill them onto the disk. For join processing, assume that the data is buffered on disk (otherwise is trivial), we can read back the data of each join key in a streaming manner (as the data is sorted,), process the join by conventional nested-loop and write out to the outbound channel. This way, the memory consumption is bounded to the buffer size, which is constant.

\subsection{PULL-EXTEND Operator}
\label{sec:pull}

As mentioned, we implement the \extend operator by pulling communication mode. It requires caching remote vertices for future reuse to reduce the pulling requests via network. 
\benu directly uses a traditional cache structure (e.g. LRU or LFU \cite{TinyLFU}) shared by all workers. We have identified two vital issues that considerably slow down cache access from such a straightforward approach.

\begin{itemize} [leftmargin=*]
    \item Memory copies: Getting a vertex from cache involves at least locating the vertex in the cache, updating the cache position, and finally copying all data (mostly neighbours) of this vertex out. Note that such memory copy is inevitable to avoid dangling pointers in the traditional cache structures, as the memory address of each entry can be changed due to potential replacement. 
    \item Lock: Since the shared cache will be concurrently written and read by multiple workers inside a machine, lock must be imposed on the cache to avoid inconsistency caused by data racing.
\end{itemize}



To address the above issues, we target a \emph{lock-free} and \emph{zero-copy} cache design for \page.  
While there exist works that focus on reducing the lock contention of concurrent cache such as \cite{cache_io}, they are not completely free from locks.
For example, benchmarks \cite{benchmark} show that such design can only achieve about $30\%$ reading performance compared to completely lock-free reads.
Moreover, existing zero-copy techniques \cite{zero-copy-RDMA,zero-copy-EMP,Kafka,zero-copy-web-cache} in distributed computation mainly work to dispatch \textit{immutable buffer} directly to network I/O devices, which cannot be applied to our scenario where the cache structure will be frequently mutated. 
Hence, it requires an innovative design, coupling specifically with the execution of the \extend operator for lock-free and zero-copy cache access. 


\begin{figure}[t!]
\removelatexerror
\begin{algorithm*}[H]
    \small
    \setstretch{0.9}
    
    \SetKwProg{mymethod}{Ref Method}{}{}
    \SetKwProg{mutmymethod}{Mut Method}{}{}
    
    \SetKwFunction{get}{Get}
    \SetKwFunction{contains}{Contains}
    \SetKwFunction{reserve}{Seal}
    \SetKwFunction{myinsert}{Insert}
    \SetKwFunction{free}{Release}
    
    \SetKwFunction{isfull}{CacheIsFull}
    \SetKwFunction{isempty}{IsEmpty}
    \SetKwFunction{pop}{PopSmallest}
    \SetKwFunction{add}{Add}
    \SetKwFunction{remove}{Remove}
    \SetKwFunction{spop}{Pop}
    
    \SetKw{mylist}{neighbours}
    \SetKw{mybool}{Bool}
    
    \KwData{A key-value map $M_{cache}$, an ordered set $\hat{S}_{free}$, a set $S_{sealed}$ }
    
    \mymethod{\get{$vid$}  $\rightarrow $ \mylist }{
        \Return{$M_{cache}[vid]$}
    }
    
    \mymethod{\contains{$vid$} $\rightarrow$ \mybool}{
        \Return{$vid\in M_{cache}$}
    }
    
    \mutmymethod{\myinsert{$vid$,\ $neighbours$}}{
        \If{$\isfull{} \land \lnot \hat{S}_{free}$.\isempty{}}{
            $u\leftarrow \hat{S}_{free}.\pop{}$ ;
            $M_{cache}.\remove{u}$ \;
        }
        $M_{cache}[vid]\leftarrow neighbours$ \;
    }
    
    \mutmymethod{\reserve{$vid$}}{
        $\hat{S}_{free}.\remove{vid}$ ;
        $S_{sealed}.\add{vid}$ \;
    }
    
    \mutmymethod{\free{}}{
        $largest\leftarrow Ord(\max(\hat{S}_{free}))+1$ \;
        \While{$u\leftarrow S_{sealed}.\spop{}$}{
            $\hat{S}_{free}$.\myinsert{$u$,\ $Ord(u)=largest$} \;
        }
    }
    
  \caption{\lrbu Cache}
  \label{alg:cache}
\end{algorithm*}
\end{figure}

\stitle{LRBU Cache.} 
We present our cache structure, \lrbu, short for \emph{least recent-batch used} cache. 
\refalg{cache} outlines the data structure of \lrbu, which consists of three members - $M_{cache}$, $\hat{S}_{free}$, and $S_{sealed}$. $M_{cache}$ stores the IDs of remote vertices as keys and their neighbours as values. $\hat{S}_{free}$ is an ordered set (\refsec{pre}) that keeps track of the orders of remote vertices that can be safely removed from the cache, where vertices with the smallest order can be replaced when the cache is full. $S_{sealed}$ represents a set of remote vertices that cannot be replaced at this time. 

There are 5 methods in \lrbu.
Given a vertex, \cget is used to obtain the neighbours if any and \ccontains checks whether the vertex presents in the cache (line~1-4). Unlike traditional cache structures, we design \cget and \ccontains to take only \emph{immutable} (i.e. read-only) references of the cache structure. As \cget and \ccontains are the two methods for reading the cache, such design makes cache read fully \textit{lock-free} when there is no concurrent writer. 

\cinsert is used to insert a remote vertex and its neighbours into the cache. Additionally, \creserve and \cfree are two unique methods of \lrbu. \creserve removes a vertex from $\hat{S}_{free}$ and adds it to $S_{sealed}$. 
\cfree pops all values in $S_{sealed}$ and adds them into $\hat{S}_{free}$. The released vertices will be given an order that is larger (line~12) than all existing vertices in $\hat{S}_{free}$. 
In the \cinsert method, replacement will be triggered if the cache is full. If $\hat{S}_{free}$ is not empty, the smallest vertex will be popped out for replacement. Thus, calling \creserve can prevent a particular vertex from being replaced when cache is full, while calling \cfree can make the certain vertices replaceable. If $\hat{S}_{free}$ is empty, the insertion will happen regardless of the capacity of the cache. This may cause the cache overflowed, but within only a limited amount as will be shown lately.

\stitle{Two-stage Execution Strategy.}
To make full use of \lrbu, we break down the execution of \extend into two separate stages, namely \emph{fetch} and \emph{intersect}. 
The algorithm of an \extend operator is given in \refalg{ext}.

\begin{figure}[t!]
\removelatexerror
\begin{algorithm*}[H]
    \small
    \setstretch{0.9}
    
    \SetKw{async}{async}
    \SetKw{parallel}{parallel}
    
    \SetKwFunction{fetch}{Fetch}
    \SetKwFunction{intersect}{Intersect}
    \SetKwProg{myproc}{Procedure}{}{}
    
    \SetKwFunction{getnbrs}{GetNbrs}
    
    \KwIn{Input channel $\mathbb{R}_i$, LRBU Cache $C$}
    \KwOut{Output channel $\mathbb{R}_{i+1}$}
    
    \myproc{\fetch{}}{
        $S_{remote} \leftarrow \{\}$ \;
        \parallel \ForAll{extended vertex $u \in \mathbb{R}_i$ }{
            $S_{remote}+=\{u\}$ \;
        }
        
        $S_{fetch}\leftarrow \{\}$ \;
        \ForEach{$u\in S_{remote}$}{
            \textbf{If}($C.$\contains{$u$}) 
            \textbf{then} $C.\reserve{u}$ 
            \textbf{else}   $S_{fetch}+=\{u\}$ \;
        }
        
        \async \ForEach{$(u,\mathcal{N}_G(u))\in\ $ \getnbrs{$S_{fetch}$}}{
            $C.$\myinsert{$u$,$\mathcal{N}_G(u)$} \;
        }
    }
    
    \myproc{\intersect{}}{
            $\mathbb{R}_{i+1}\leftarrow \{\}$ \;
            \parallel \ForAll{$p_i\in \mathbb{R}_i$}{
                $nbrs\_list\leftarrow \{\}$ \;
                \ForEach{extended vertex $u\in p_i$}{
                    \textbf{If} ($u$ is remote) 
                    \textbf{then}
                        $nbrs\_list+=\{C.\get{$u$}\}$ \;
                    \textbf{else} $nbrs\_list+=\{\mathcal{N}_G(u)\}$ \;
                }
                $\mathbb{C}(v_{i+1}) \leftarrow \cap_{nbrs\in nbrs\_list} nbrs$ \;
                \ForEach{$v\in \mathbb{C}(v_{i+1})$}{
                    \textbf{If}($v\not\in p_i$)
                    \textbf{then} $\mathbb{R}_{i+1}+=\{p_{i}+\{v\}\}$ \;
                }
        }
        
        $C$.\free{} \;
    }
    
    \Return{$\mathbb{R}_{i+1}$}\;
    
    \caption{Algorithm of \extend}
    \label{alg:ext}
\end{algorithm*}
\end{figure}

In the fetch stage, the \extend scans the input data and collects a set $S_{remote}$ of all remote vertices that need to be fetched in the current batch (line~2-4). It then checks for each remote vertex if the vertex is in the cache already (line~7). If the vertex has been cached, the extender seals this vertex in the cache, which prevents this particular entry to be replaced while processing this batch of data. Otherwise, it puts the vertex into a fetch set $S_{fetch}$. At the last step of the fetch stage, all vertices in $S_{fetch}$ will be fetched asynchronously by sending the \nbrs RPC in batches and inserted into the shared cache using one single writer (line~8-9). Note that cache write can be well overlapped with the asynchronous RPC requests. In the intersect stage, the extender performs the multiway intersections defined in \refeq{wco} to obtain the results and send them to the output (line~17). Finally, the sealed vertices are released by calling \cfree (line 20), which updates cache positions to allow them to be replaced thereafter.

In the execution, remote vertices are sealed at the beginning (line~7) and released at the end (line~20), which represents the vertices used in the very recent batch. As a result, even the cache is overflowed, the amount will not be more than the maximum number of the remote vertices in a batch. 
When the cache is full, \lrbu replaces the vertices with the smallest order, which must be the vertices from the least-recent batch (how \lrbu is named).



The two-stage execution strategy, together with the \lrbu cache structure, eventually leads to a zero-copy and lock-free cache access in \extend operator:

\sstitle{$\blacktriangleright$ Zero-copy.} Each vertex that will be accessed during the intersection is either in the local partition or sealed in the cache (line 15-16). As no modification will occur on the cache structure in the intersect stage (until next batch), we can access the vertex data by simply referencing the memory. 

\sstitle{$\blacktriangleright$ Lock-free.} Recall that the \get method of \lrbu is read-only and no write operation is executed during intersection. Cache write only happens in the stage of fetch (line 7 and 9), and at the end of extend (line 20). As we allow only one cache writer in each machine, the cache access (both read and write) in \page is completely lock-free. 



\begin{remark}
Our two-stage execution strategy separates fetch and intersect stages for lock-free and zero-copy cache access, which results in vastly improved performance. Synchronisation between fetch stage and intersect stage is necessary, but the overhead is very small as demonstrated in Exp-6 (\refsec{exp}).  
In addition, the initial scan in the fetch procedure can effectively aggregate RPC requests of pulling remote vertices, letting merged RPCs to be sent in bulk, which results in effective network utilisation.
\end{remark}
\section{Scheduling}
\label{sec:schedule}
We present in this section how we address the memory issue of subgraph enumeration by developing advanced scheduling techniques for \page. Note that there requires global synchronisation for \join operator to guarantee no missing results. To ease the presentation, we first assume that the execution plan contains no \join to focus on the two scheduling techniques - DFS/BFS-adaptive scheduling for bounded-memory execution and work stealing for load balancing. Then, we introduce how to plugin the \join operator. 

\subsection{Overview}
\page's scheduler is a daemon thread in each machine that maintains a shared scheduling channel with all workers. Each worker can either send its status to the scheduler or receive scheduling signals. Once an operator $O$ is scheduled (calling \texttt{schedule($O$)}), the scheduler will broadcast a \emph{schedule} signal to all workers to run $O$. The scheduler can also broadcast a \emph{yield} signal to yield the running of $O$ (calling \texttt{Yield($O$)}). The workers, once received the yield signal, will complete the current batch before going to sleep.


Without \join, the dataflow graph is a directed line graph. Thus, there will be at most one precursor (and successor) for each operator. Naively, there are two scheduling orders, depth-first-search (DFS) order and breadth-first-search (BFS) order. DFS scheduler will immediately yield the current operator and schedule the successor, as long as the current operator has completed \emph{one} batch of input data. When obtaining the final results from one batch, the scheduler backtracks to the starting point to consume the next batch. On the other hand, the BFS scheduler will sequentially schedule the operators in the dataflow and not move forward to the successor until it completes computing all input data batches. 


DFS scheduler may not fully utilize parallelism and network bandwidth\cite{BENU}, while BFS scheduler can suffer from memory crisis due to the maintenance of enormous intermediate results \cite{seed,twin-twig,edge-join,star-join}. Existing works use static heuristics such as region group \cite{RADS} and batching \cite{wco-join,patmat-exp} to constrain the scheduler to only consume a portion (e.g. a batch) of input data (vertice/edges) on the \scan operator, and will not move to the next portion until it sinks the final results of this portion. 
Such static heuristics lack in theoretical guarantee and can perform poorly in practice. We have observed out-of-memory errors from the experiment even while starting from one single vertex (e.g. on \dtcw in \refsec{exp}). 

\subsection{DFS/BFS-adaptive Scheduler}
\label{sec:batch}


\begin{figure}[t!]
\removelatexerror
\begin{algorithm*}[H]
    \small
    \setstretch{0.9}
    
    \SetKwFor{Loop}{Loop}{}{EndLoop}
    \SetKw{Break}{break}
    
    \SetKwFunction{run}{Schedule}
    \SetKwFunction{yield}{Yield}
    \SetKwFunction{precursor}{Precursor}
    \SetKwFunction{successor}{Successor}

    \KwIn{An execution plan $P$}
    
    $O\leftarrow$ first operator in $P$ \; 
    
    \While{there are uncompleted operators}{
        \If {$O$ has no input $\land O\neq \scan$}{
            $O\leftarrow P.$\precursor{$O$} \;
        }\Else{
            \run{$O$} \;
            
            \Loop{}{
                \If{$Q_{O}$.is\_full()  $\lor$ $O$ has no input}{
                    \yield{$O$} ;
                    \Break \;
                }
            }
            
            \textbf{If}($O = \sink$) 
            \textbf{then}  $O\leftarrow P.$\precursor{$O$} \;
            \textbf{else} $O\leftarrow P.$\successor{$O$} \;
           
        }
    }
  \caption{DFS/BFS-adaptive Scheduler}
  \label{alg:batching}
\end{algorithm*}
\end{figure}

We propose a \emph{DFS/BFS-adaptive scheduler} for \page to bound the memory usage while keeping high network and CPU utilisation. Specifically, we equip a fixed-capacity output queue $Q_{O}$ for each output channel of all operators in \page. \refalg{batching} illustrates the algorithm. Once an operator is scheduled, the scheduler tends to let it consume as much input data as possible to drive high the CPU utilisation. Meanwhile, each worker will report the number of results in $Q_O$ to the scheduler once it completes computing one batch of data. 
Whenever $Q_O$ is full, it broadcasts the ``yield'' signal to all workers to yield the current operator, preventing it from consuming any more batches (line~9).
The successor is then scheduled to consume the output of the current operator (line~11). If all results in the input channel are consumed, the scheduler backtracks to the precursor (line~4) and repeats the process until the data in all operators has been consumed. 
Backtracking is always triggered on \sink because it consumes all input data directly (line~10).

\begin{figure}[t]
    \centering
    \includegraphics[width=0.48\textwidth]{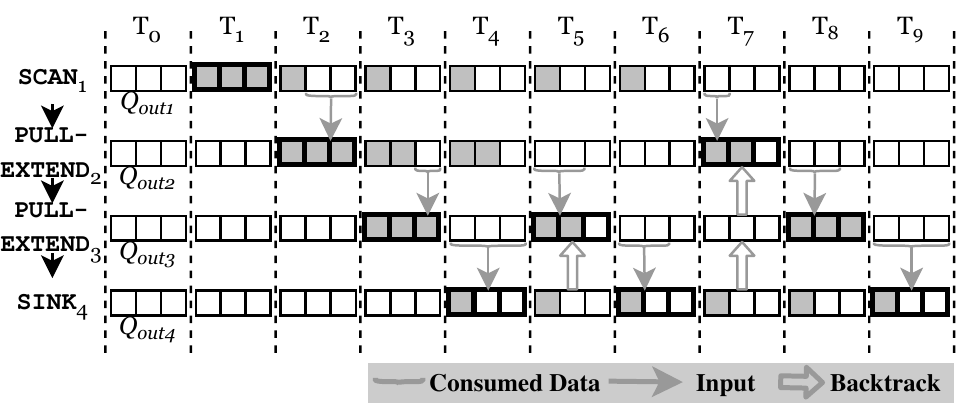}
    \caption{Running Example of DFS/BFS-adaptive Scheduler}
    \label{fig:batching}
\end{figure}


\begin{example}
An example is shown in \reffig{batching} (time slot $T_i$), with each block represents one batch of data and the operator under schedule highlighted.
Each operator has its own output queue $Q_{out_{i}}$ with fixed size equals to three batches. 
All queues are initially empty ($T_0$). The \scannospace$_1$ operator scans the data graph at $T_1$, outputting 3 batches. As the output queue is full, the scheduler yields the operator and schedules \extendnospace$_2$ at $T_2$. The process repeats until $T_4$, where the input of \sinknospace$_4$ becomes empty. Thus, the scheduler yields \sinknospace$_4$ and triggers backtracking. It moves to the precursor \extendnospace$_3$, and schedules this operator at $T_5$. Backtracking is also triggered at $T_6$ where the input of current operator becomes empty. However, when the scheduler backtracks to \extendnospace$_3$, its input is also empty. So the scheduler further moves forward to \extendnospace$_2$ and starts scheduling \extendnospace$_2$ at $T_7$.
\end{example}



\stitle{Bounded-Memory Execution.} Different from the static heuristics \cite{wco-join,RADS} that lack in a tight bound, we show how the DFS/BFS-adaptive scheduler helps bound memory consumption. Note that \sink operator directly writes data to the external devices and has no need of holding data, 
which is hence excluded from the memory analysis. We first present the following lemma for an \extend operator.

\begin{lemma}
\setstretch{0.9}
\label{lem:bound_extend}
The memory bound of scheduling a \extend operator is $O(|V_q|\cdot D_G)$.
\end{lemma}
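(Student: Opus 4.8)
The plan is to track exactly what data an \extend operator must hold in memory at any instant during its execution under the DFS/BFS-adaptive scheduler, and show that each component is bounded by $O(D_G)$ per query vertex, giving $O(|V_q|\cdot D_G)$ overall. The three memory-consuming components to analyse are: (1) the fixed-capacity output queue $Q_O$ attached to the \extend operator; (2) the \lrbu cache $C$ used to hold pulled neighbour lists; and (3) the transient working sets in the \texttt{fetch} and \texttt{intersect} procedures ($S_{remote}$, $S_{fetch}$, $nbrs\_list$, $\mathbb{C}(v_{i+1})$ in \refalg{ext}). First I would argue that $Q_O$ has constant capacity by construction (a fixed number of batches, each of constant size), so it contributes $O(1)$. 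Likewise the input is consumed one batch at a time, and a batch is a constant-size collection of partial matches each of length at most $|V_q|$, so the live input contributes $O(|V_q|)$.

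The core of the argument is bounding the \lrbu cache. By the scheduler, \extend processes one batch at a time; within a batch, \texttt{fetch} collects $S_{remote}$, the set of extended vertices appearing in that batch, seals those already cached, and pulls the rest into the cache via a single writer. The key observation (already established in the discussion following \refalg{ext}) is that the cache can overflow only by the number of vertices sealed/inserted during the current batch, because when the cache is full \lrbu evicts the smallest-order entry, and only entries from strictly earlier batches have smaller order than the current batch's entries — so eviction frees space down to (roughly) the size of the current batch's footprint plus the nominal capacity. Hence at any time the cache holds $O(\text{capacity} + |S_{remote}|)$ neighbour lists. Now $|S_{remote}|$ is at most (batch size) $\times$ (number of extension positions) $= O(|V_q|)$ vertices, since each partial match has at most $|V_q|$ entries and the batch size is constant; and each neighbour list has size at most $D_G$. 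Taking the nominal cache capacity to be a constant (or at most $O(|V_q|\cdot D_G)$, which is the design choice), the cache contributes $O(|V_q|\cdot D_G)$.

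For the transient sets: $S_{remote}$ and $S_{fetch}$ hold vertex IDs only, $O(|V_q|)$ of them; $nbrs\_list$ in \texttt{intersect} holds at most $|V_q|$ references to neighbour lists already resident in the cache or local partition (zero-copy, so no duplication), contributing no new asymptotic cost; and $\mathbb{C}(v_{i+1})$, the running intersection, has size at most $D_G$ since it is a subset of some $\mathcal{N}_G(\cdot)$. The newly produced partial matches are pushed into $Q_O$, whose size is already accounted for. Summing the three components gives the claimed bound $O(|V_q|\cdot D_G)$.

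I expect the main obstacle to be making the cache-overflow bound fully rigorous: one must argue carefully that across the lifetime of the operator the ``sealed but not yet released'' set never accumulates beyond one batch's worth — i.e. that \cfree is always called at the end of \texttt{intersect} before the next batch's \texttt{fetch} seals a fresh set — and that the order assigned on release (line~12 of \refalg{cache}, larger than all existing free entries) correctly guarantees that only strictly-earlier-batch entries are evicted first. A secondary subtlety is handling the boundary case in \cinsert where $\hat{S}_{free}$ is empty and insertion proceeds regardless of capacity; here one must check that this can happen at most $|S_{fetch}| = O(|V_q|)$ times within a batch before \cfree replenishes $\hat{S}_{free}$, so the uncontrolled growth is still within the $O(|V_q|\cdot D_G)$ budget. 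Everything else is bookkeeping on batch sizes and neighbour-list lengths.
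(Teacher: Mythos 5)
Your decomposition (output queue, cache, transient sets) is broader than the paper's, which bounds only the intermediate-result memory the scheduler controls; but in doing so you mishandle the one component the lemma is actually about. You assert that $Q_O$ ``has constant capacity by construction, so it contributes $O(1)$'' and later that the newly produced partial matches are ``already accounted for.'' This misses the overflow mechanism: the scheduler only checks fullness of $Q_O$ at batch boundaries (workers finish the in-flight batch before yielding), so while processing one input batch the operator can push up to $|batch|\cdot D_G$ new partial results beyond the queue's nominal capacity, each stored as a compact array of length $O(|V_q|)$. That product, $O(|batch|\cdot D_G\cdot |V_q|)=O(|V_q|\cdot D_G)$ with $|batch|$ a constant, is precisely the paper's proof of the lemma and is the dominant term; as written, your argument never bounds it. The fix is immediate and lands on the same bound (indeed it mirrors the per-batch overflow argument you already make for the cache), but without it the main source of memory growth is unaccounted.

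A secondary issue is your treatment of the \lrbu cache. The paper excludes the cache's nominal capacity from this lemma, treating it (like the local graph partition) as a fixed, pre-allocated resource; only the per-batch overflow matters, and your bound of that overflow --- at most $O(|V_q|)$ sealed/fetched vertices per constant-size batch, each with a neighbour list of size at most $D_G$ --- is correct and consistent with the paper's discussion in \refsec{pull}. However, your parenthetical that the nominal capacity is ``at most $O(|V_q|\cdot D_G)$, which is the design choice'' is not the paper's design: the capacity is configured as a fraction of the data graph (e.g.\ $30\%$), which can far exceed $|V_q|\cdot D_G$. If you insist on counting the cache's nominal capacity inside the lemma's bound, the claimed bound would fail; the correct accounting is to count only the overflow, as the paper does.
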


\begin{proof}
For a \extend operator, we analysis the size of its output queue. Recall that the output queue has a fixed size, and \page's scheduler yields the operator when the output queue is full. However, as \page computes at least one batch of data (\refsec{opt}) at a time, the output queue can be potentially overflowed by the results of one batch of data. Given the size of a batch as $|batch|$, the maximum number of partial results that a single data batch can generate is $|batch|\cdot D_G$. Then, we need to consider the size of each partial result. Since \page stores each partial results as a compact array $\{u_1,u_2,\dots,u_{|v_q|}\}$, the size of each partial result is $O(|v_q|)$. Therefore, the memory bound is the product of $|batch|\cdot D_G$ and $O(|v_q|)$. As $|batch|$ is a pre-configured constant, we have the memory bound of scheduling a \extend operator is $O(|V_q|\cdot D_G)$.
\end{proof}

We discuss the other two cases in the following, namely \scan operator and the process of pulling-based hash join.

\sstitle{SCAN.} Note that the memory may overflow while enumerating a star (as the join unit). Thus, instead of directly computing the star, we rewrite a $\scan(q' = (v; \leaves))$ operator in a dataflow, via an initial $\scan(q_e = (v, v'))$ for any $v' \in \leaves$ to produce the first edge, which is then chained with $(|\leaves| - 1)$ \extend$(Ext = \{0\})$ operators to grow the other edges of the star. 



\sstitle{Pulling-based Hash Join.} Consider a join $(q', q'_l, q'_r)$ that is processed via pulling-based hash join, where $q'_r$ is a star $(v'_r; \leaves)$ (must be so according to \refeq{settings}). Similar to the \scan operator, a pulling-based hash join may also suffer from memory issue of computing stars. We show how such a join can be realized via a series of \extend operators to resolve the issue.

As a preliminary notation, given a query graph $q$ with the vertices listed as $\{v_1, v_2, \ldots, v_n\}$ and $V'_q \subseteq V_q$, we denote $\idx(q|V'_q)$ as an ordered indices of $q$ w.r.t. $V'_q$, where $i \in \idx(q|V'_q)$ if and only if $v_i \in V'_q$.
We split $\leaves$ into two parts, namely $V_1 = \leaves \cap V_{q'_l}$ and $V_2 = \leaves \setminus V_1$, and accordingly divide the execution into a chain of \extend operators.  Specifically,

\begin{itemize}[leftmargin=*]
    \item If $V_1 \neq \emptyset$, we deploy a \extend$(Ext = \idx(q'_l|V_1))$ operator. Note that this extension does not actually match new query vertex, but rather verify the connection between $v'_r$ and each $v \in V_1$ in a match. Thus, we install a hint on the operator to only preserve the result $f$ where $f(v'_r) = u_{i + 1}$, and get rid of the extended $u_{i+1}$ in the result.
    \item For each $v \in V_2$, we sequentially chain a new \extend$(Ext = \idx(q'_l|\{v\}))$ operator to grow the other star edges.
\end{itemize}

With the above transformations, we further have:
\begin{lemma}
\setstretch{0.9}
\label{lem:bound_others}
Given $q_s$ as a star $(v_s; \leaves)$, the memory bound of scheduling a \scan($q_s$) operator and a pulling-based hash join $(q', q'_l, q_s)$ are $O(|\leaves|^2\cdot D_G)$ and $O(|\leaves|\cdot|V_{q'}|\cdot D_G)$, respectively.
\end{lemma}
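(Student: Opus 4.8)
The plan is to reduce both bounds to \reflem{bound_extend} by counting operators and pinning down the \emph{width} of the partial results that flow through each realization. Recall that with the DFS/BFS-adaptive scheduler of \refalg{batching}, at any instant the memory held by the line dataflow that realizes an operation is the sum, over its operators, of the contents of their fixed-capacity output queues, plus -- for each \extend operator -- its \lrbu cache, whose capacity is a pre-configured constant and whose overflow is bounded by the neighbour lists fetched in one batch. The argument inside the proof of \reflem{bound_extend} already shows that a single \extend operator whose partial results have at most $w$ matched query vertices holds $O(|batch|\cdot D_G\cdot w)=O(w\cdot D_G)$ worth of data in its output queue, and that the \scan operator which only emits seed edges holds $O(|batch|\cdot D_G)=O(D_G)$. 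So in both cases it suffices to (i) bound the number of operators in the realization and (ii) bound $w$.

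First I would handle \scan($q_s$). By the rewriting of \scan($q_s=(v_s;\leaves)$) described above, it is realized as a line dataflow consisting of one \scan operator emitting a seed edge followed by $|\leaves|-1$ \extend operators, i.e.\ $O(|\leaves|)$ operators in total. Every partial result produced anywhere in this sub-chain is a (sub-)match of the star $q_s$, hence has at most $|\leaves|+1$ vertices, so $w=O(|\leaves|)$; moreover each of these \extend operators uses a singleton extend index, so its cache overflow is only $O(D_G)$. Summing $O(|\leaves|\cdot D_G)$ over the $O(|\leaves|)$ operators (the \scan contributing only $O(D_G)$) gives the claimed bound $O(|\leaves|^2\cdot D_G)$.

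Next I would handle the pulling-based hash join $(q',q'_l,q_s)$ with $q_s=(v_s;\leaves)$. Writing $V_1=\leaves\cap V_{q'_l}$ and $V_2=\leaves\setminus V_1$, the realization appends to the precursor producing $\mathbb{R}(q'_l)$ at most one verification \extend operator (when $V_1\neq\emptyset$) together with one \extend operator per vertex of $V_2$, i.e.\ at most $1+|V_2|\le 1+|\leaves|=O(|\leaves|)$ \extend operators. Here the partial results are (sub-)matches of $q'=q'_l\cup q_s$ -- each already containing all of $V_{q'_l}$ and growing toward $V_{q'}$ -- so $w=O(|V_{q'}|)$; and since $\leaves\subseteq V_{q'}$, the cache overflow of each operator, $O(|\leaves|\cdot D_G)$, is dominated by its output-queue bound $O(|V_{q'}|\cdot D_G)$. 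Summing over the $O(|\leaves|)$ operators yields $O(|\leaves|\cdot |V_{q'}|\cdot D_G)$.

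The main obstacle, and the only point needing care beyond bookkeeping, is justifying that the memory concurrently held across a sub-chain is genuinely the \emph{sum} of the per-operator output-queue bounds rather than something larger: one must argue that the scheduler of \refalg{batching} keeps every queue within its fixed capacity plus at most one overfull batch, and that the input of each operator in the chain is exactly the output queue of its precursor (already counted against the precursor) rather than a separately materialized copy. The secondary subtlety is correctly identifying the partial-result width in each case -- it is the number of query vertices matched \emph{so far}, which is $O(|\leaves|)$ inside the star-scan sub-chain but $O(|V_{q'}|)$ inside the join sub-chain -- and checking that the verification \extend operator, which merely filters and drops its extended vertex, cannot cause a $D_G$-blowup of its output queue.
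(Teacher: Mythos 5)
Your proof is correct and takes essentially the same route as the paper's: rewrite \scan($q_s$) and the pulling-based hash join into chains of $O(|\leaves|)$ \extend operators, apply the per-operator bound of \reflem{bound_extend} with the appropriate partial-result width ($O(|\leaves|)$ for the star scan, $O(|V_{q'}|)$ for the join), and sum over the operators. The extra care you take in justifying the summation across the chain and in accounting for the constant-capacity \lrbu cache only makes explicit what the paper's proof leaves implicit.
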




\begin{proof}
For \scan, we rewrite it into an initial scan and $|\leaves|-1$ \extend operators, and all of them are equipped with fixed-size output queues.
In the initial scan, each worker in the machine scans the local partition one vertex at a time. In the case of overflowing an output queue, the overflow is no more than the maximum number of edges that can be generated by one single vertex, which is $O(D_G)$.
There are $|\leaves|-1$ \extend operator followed by. By \reflem{bound_extend}, we know that the memory bound of each \extend operation is $O(|\leaves|\cdot D_G)$, so the total memory bound for \scan is $O(|\leaves|^2\cdot D_G)$.
\end{proof}

\begin{proof}
Similarly, for pulling-based hash join, it is divided into $|\leaves|$ \extend operations, where the memory bound of each \extend operator is $O(|V_q'|\cdot D_G)$. The overall memory bound of a pulling-based hash join is therefore $O(|\leaves|\cdot|V_{q'}|\cdot D_G)$.
\end{proof}

Summarizing from \reflem{bound_extend} and \reflem{bound_others}, we finally have:
\begin{theorem}
\setstretch{0.9}
\label{thm:memory_bound}
\page schedules a subgraph enumeration task with the memory bound of $O(|V_q|^2 \cdot D_G)$.
\end{theorem}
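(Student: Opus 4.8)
The plan is to reduce the theorem to \reflem{bound_extend} and \reflem{bound_others} by showing that, at any instant of the schedule, the live memory equals a sum, over the operators of the dataflow graph, of (i) the contents of their fixed-capacity output queues and (ii), for \extend operators, their \lrbu caches, and that this sum has $O(|V_q|)$ terms, each of size $O(|V_q|\cdot D_G)$.

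First I would fix the structure of the dataflow. Assuming first that the plan contains no \join, \refalg{plan-to-df} together with the rewritings of \refsec{batch} turns every \scan of a star and every pulling-based hash join into an initial \scan of a single edge followed by a chain of \extend operators, while each complete star join becomes a single \extend. Hence the dataflow is a directed line graph whose operators are edge-\scan's and \extend's, terminated by \sink (which writes to external I/O, holds no data, and is excluded from the analysis). I would then bound the number of operators: the \extend's matching a fresh query vertex number at most $|V_q|$, since each query vertex is matched once; the edge-verification \extend's are charged at most one per join; and the plan has $O(|V_q|)$ joins, so the line graph has $O(|V_q|)$ operators.

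Next, for each operator I would bound its footprint. Every operator owns a fixed-capacity output queue, and because the scheduler (\refalg{batching}) can only yield an operator after it has finished at least one batch, the queue holds at most its capacity plus one extra batch of partial results; by the reasoning inside the proofs of \reflem{bound_extend} and \reflem{bound_others}, each batch produces $O(|batch|\cdot D_G)$ partial results, each stored compactly in $O(|V_q|)$ space, so the queue occupies $O(|V_q|\cdot D_G)$ after absorbing the constants $|batch|$ and the queue capacity. For an \extend operator I would additionally bound its \lrbu cache: its base capacity is a preconfigured constant and, as argued in \refsec{pull}, it overflows by at most the remote vertices touched in the current batch, each contributing $O(D_G)$ neighbours, so the cache adds a further $O(|V_q|\cdot D_G)$. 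Summing over the $O(|V_q|)$ operators, and observing that each partial result produced so far resides in exactly one output queue (up to one in-flight batch being read by a successor, which only changes constants), the total is $O(|V_q|)\cdot O(|V_q|\cdot D_G)=O(|V_q|^2\cdot D_G)$.

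Finally I would put the \join operator back. By \refsec{join} a \join runs a buffered distributed hash join with a constant in-memory buffer threshold, spilling the overflow to disk and streaming the join back, so it contributes only $O(1)$ memory; reinstating the \join's merely cuts the dataflow into line segments to which the previous argument applies, and since there are $O(|V_q|)$ of them the extra buffer memory is negligible, leaving the bound at $O(|V_q|^2\cdot D_G)$. I expect the bookkeeping of this synthesis to be the main obstacle: counting operators correctly after the star/verification rewritings, treating the ``one extra batch'' overflow of both queues and caches uniformly, and ruling out double counting of partial results across output queues and across the segments induced by \join operators.
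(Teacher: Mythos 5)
Your proposal is correct and follows essentially the same route as the paper: transform \scan and pulling-based hash joins into chains of \extend operators via \reflem{bound_extend} and \reflem{bound_others}, observe that the resulting dataflow has $O(|V_q|)$ operators each holding $O(|V_q|\cdot D_G)$ memory, and rely on the constant-size disk-spilling buffer of \join so that it contributes only negligible memory. Your accounting is in fact slightly more explicit than the paper's (you also bound the \lrbu cache overflow and the per-join verification \extend's), but the decomposition and the key lemmas are identical.
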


\begin{proof}
Consider a dataflow after \scan and pulling-based hash join are transformed to \extend operators. It contains at most $O(|V_q|)$ \extend operators, each of which consumes at most $O(|V_q| \cdot D_G)$ memory (\reflem{bound_extend}). Hence, the overall memory bound of \page to execute a a subgraph enumeration task is $O(|V_q|^2 \cdot D_G)$.
\end{proof}

\subsection{Load Balancing}
\label{sec:load_balance}
Graph computation is usually irregular due to the power-law characteristics in real-world graphs \cite{power-law-1,power-law-2}. 
Current solutions \cite{RADS,BENU} often distribute load based on the firstly matched vertex, which may still suffer from load skew. 
In \page, we adopt the work-stealing technique \cite{Fractal,gemini} to \emph{dynamically} balance the load. We implement a two-layer intra- and inter-machine work stealing to accommodate \page's caching mechanism and BFS/DFS-adaptive scheduler.


For intra-machine work stealing, we maintain a \emph{deque} \cite{Chase-Lev-deque} in each worker. Once the worker executes an operator, it injects the partial results $\mathbb{R}_i$ from the operator's input channel to its own deque. The current worker will pop out $\mathbb{R}_i$ from the back of the deque to do computation. Once a worker has completed its own job by emptying its deque, it will randomly pick one of the workers with non-empty deque, and steal half of the data from the front. For \extend operator, recall that its execution is separated into fetch and intersect stages. While there is barely any skew for fetching data, we only apply intra-machine work stealing to the intersect stage. Specifically, when a worker completes its computation in line~21 of \refalg{ext}, it will try to steal the other worker's unprocessed data in line~12 to continue the process.




Inter-machine work stealing happens when any machine completes computing its own job. In this case, the scheduler of the machine will send the \steal RPC to a random remote machine to steal unprocessed partial results in batches from the input channel of the \emph{top-most unfinished operator}. If receiving data, the scheduler will schedule the corresponding operator to compute the received data; otherwise, it picks another random machine to repeat the attempt. Machines who have completed their own work will send their status to the first machine in the cluster upon completion. The first machine will then broadcast the messages to all other machines in the cluster. A list of finished machines is maintained at each machine, whose job will not be stolen. Once the computation of stolen work is done and there is no more remote work to steal (i.e. all machines have finished their own job), the machine sends the status to the first machine again to mark termination.

Note that the work stealing is applied at operator-level as described to better balance the load. This is because the exponential nature of subgraph enumeration that can cause the intermediate results to explode at any operator on certain vertices (especially large-degree vertices).

\subsection{Handling Join Operator}
\label{sec:schedule_join}
\page enforces a synchronisation barrier prior to the \join operator, thus the join cannot proceed until both precursors complete their computation. With \join operator, the dataflow graph of \page becomes a directed tree. 

We first consider a dataflow $P$ with one \join operator (e.g. \reffig{5-path-dataflow}), which contains a left subgraph $P_1$ and a right subgraph $P_2$. \page first computes $P_1$, and then $P_2$, whose results will be globally synchronized at the barrier of \join. As $P_1$ and $P_2$ contains only \extend, they can be scheduled via the above scheduling techniques (\Cref{sec:batch,sec:load_balance}). 
\page computes the join after the computation of $P_1$ and $P_2$ are completed. 

Given $P_1$ and $P_2$, we use $P_1\dotarrow P_2$ to denote $P_1$ must be computed before $P_2$. In \reffig{5-path-dataflow}, we have $P_l \dotarrow P$ and $P_r \dotarrow P$. 
Each subgraph contains no \join can be directly scheduled; otherwise, it will be recursively divided by \join. By constructing a DAG of all subgraphs based on the $\dotarrow$ relations, a valid execution order can be determined via topological ordering of the DAG.

BFS/DFS-adaptive scheduling is unnecessary for \join, as the buffering technique (\refsec{join}) can already prevent memory from overflowing. While join may produce too many data to overflow the successors, we allow \join to actively tell the scheduler to yield its execution in case that its output queue is full. 
Regarding work stealing, we only apply intra-machine stealing for \join. For the non-trivial case that the buffered data is on disk, a worker can steal job by simply advancing the reading offsets of the other worker's buffered files.

\section{Applications}
\label{sec:applications}
\page is designed to be flexible for extending more functionalities. 
Extended systems can directly benefit from \page's pushing/pulling-hybrid communication and bounded-memory execution. We introduce three representative examples.

\sstitle{Cypher-based Distributed Graph Databases.} Subgraph enumeration is key to querying graph databases using language like Cypher \cite{cypher}. \page can thus be extended as a Cypher-based distributed graph database, by implementing more operations like projection, aggregation and ordering, and connecting it with a front-end parser (e.g. \cite{patmat-demo}) and an optimizer with cost estimation for labelled (and/or property) data graph (e.g. \cite{graphflow}). 

\sstitle{Graph Pattern Mining (GPM) Systems.} A GPM system \cite{Arabesque,AutoMine,Fractal,Peregrine} aims to find all subgraph patterns of interest in a large data graph. It supports applications such as motif counting \cite{motif} and frequent subgraph mining \cite{fsm}.
It essentially processes subgraph enumeration repeatedly from small query graphs to larger ones, each time adding one more query vertex/edge. 
Thus, \page can be deployed as a GPM system by adding the control flow like loop in order to construct a more complex dataflow for GPM tasks.

\sstitle{Shortest Path \& Hop-constrained Path.} \page can also be applied to solve more general path queries, such as the classic shortest path problem or hop-constrained path enumeration \cite{st-path}. Shortest path can be computed by repeatedly applying \extend from the source vertex until it arrives at the target. For hop-constrained path enumeration, \page can conduct a bi-directional BFS by extending from both ends and joining in the middle.
\section{Experiments}
\label{sec:exp}

\subsection{Experimental Setup}
We follow \cite{patmat-exp} to build a Rust codebase for a fair comparison. 
For join-based algorithms (\bigjoin and \seed), we directly adopt the Rust implementations in \cite{patmat-exp}, which contains many optimisations (e.g. symmetry break and compression). For \rads, the original authors have reviewed our implementation. 
For \benu, we select the distributed key-value database Cassandra \cite{cassandra} to store the data graph as recommended by the original authors. For others, we partition and store the data graph in the compressed sparse row (CSR) format and keep them in-memory.
We use the generic compression optimisation \cite{crystaljoin} whenever it is possible in all implementations, and decompress (by counting) to verify the results. 

\stitle{Hardware.} We deploy \page in: (1) a local cluster of 10 machines, each with a 4-core Intel Xeon CPU E3-1220, 64GB memory, 1TB disk, connected via a 10Gbps network; (2) an AWS cluster of 16 ``r5.8xlarge'' instances, each with 32 vCPUs, 256GB memory, 1TB Amazon EBS storage, connected via a 10Gbps network. We run 4 workers in the local cluster and 14 workers in the AWS cluster. All experiments are conducted in the local cluster except Exp-3. 

\stitle{Datasets.} We use 7 real-world datasets of different sizes in our experiments as in \reftable{data}. \textsf{Google} (\dtgo), \textsf{LiveJounal} (\dtlj), \textsf{Orkut} (\dtor), and \textsf{Friendster} (\dtfs) are downloaded from \cite{snap}. \textsf{UK02} (\dtuk), \textsf{EU-road} (\dteu), and \textsf{ClueWeb12} (\dtcw) are obtained from \cite{webgraph}, \cite{challenge9}, and \cite{clubweb}, respectively. The datasets include social graphs (\dtlj, \dtor and \dtfs), road networks (\dteu), and web graphs (\dtgo, \dtuk and \dtcw). 

\stitle{Queries.} We use 7 queries according to prior works \cite{seed,crystaljoin,wco-join,patmat-exp,BENU,RADS} as shown in \reffig{queries}. The partial orders for symmetry breaking are listed below each query.

\stitle{Parameters and Metrics.} If not otherwise specified, we use $q_1$-$q_3$ as the default queries, and \dtuk as the default dataset. Note that we may omit certain results for clarity. 
We configure the default system parameters of \page as batch size: $512K$ (\refsec{arch}), cache capacity: $30\%$ of the data graph (\refsec{pull}), and output queue size: $5{\times}10^7$ (\refsec{batch}). 
We allow 3 hours for each query. \timeout and \oom are used to indicate a query runs overtime and out of memory, respectively. We measure the total time $T$, computation time $T_R$ and communication time $T_C = T - T_R$ according to \cite{patmat-exp}. In the bar char, we present the \emph{ratio} of $\frac{T_C}{T}$ using grey filling, and mark the case of \oom with a \bm{$\mathbin{\textcolor{red}{\times}}$} on top of the bar.

\begin{table}[t]
\caption{Table of Datasets}
\label{tab:data}
\small
\setstretch{0.9}

\begin{tabular}{lrrrr}
\hline
Dataset & $|V|$       & $|E|$          & $d_{max}$  & $d_{avg}$ \\ \hline
\dtgo   & 875,713     & 4,322,051      & 6,332      & 5.0       \\
\dtlj   & 4,847,571   & 43,369,619     & 20,333     & 17.9      \\
\dtor   & 3,072,441   & 117,185,083    & 33,313     & 38.1      \\
\dtuk   & 18,520,486  & 298,113,762    & 194,955    & 16.1      \\
\dteu   & 173,789,185 & 347,997,111    & 20         & 3.9       \\
\dtfs   & 65,608,366  & 1,806,067,135  & 5,214      & 27.5      \\
\dtcw   & 978,409,098 & 42,574,107,469 & 75,611,696 & 43.5      \\ \hline
\end{tabular}
\end{table}

\begin{figure}[t]
  \centering
  \includegraphics[width=.4\textwidth]{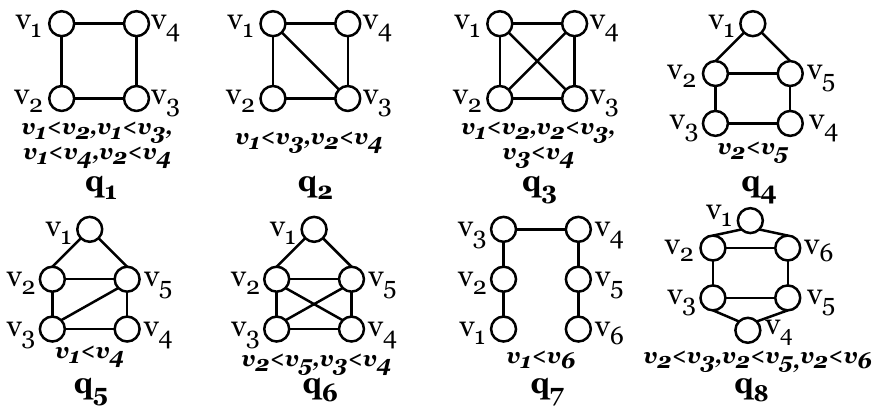}  
  \caption{The Query Graphs}
  \label{fig:queries}
\end{figure}

\subsection{Comparing Existing Solutions}

\begin{figure*}[ht]
    \centering
    \begin{subfigure}[b]{.5\textwidth}
        \centering
        \includegraphics[height = 0.08in]{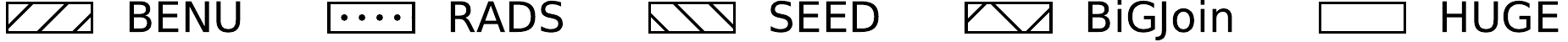}
    \end{subfigure} %
    \\
    \begin{minipage}{.33\textwidth}
        \centering
        \begin{subfigure}{.4\textwidth}
          \centering
          \includegraphics[width=\linewidth]{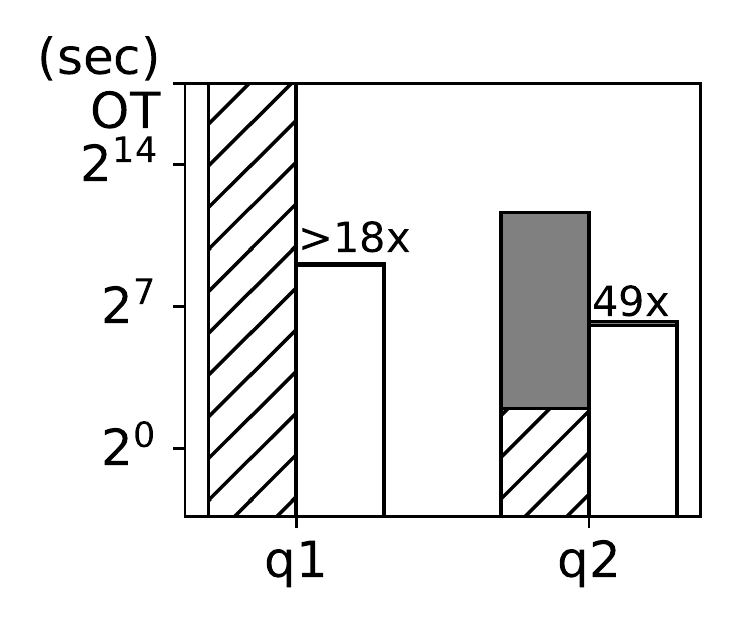}  
          \caption{\pagebenu}
        \end{subfigure} %
        ~
        \begin{subfigure}{.4\textwidth}
          \centering
          \includegraphics[width=\linewidth]{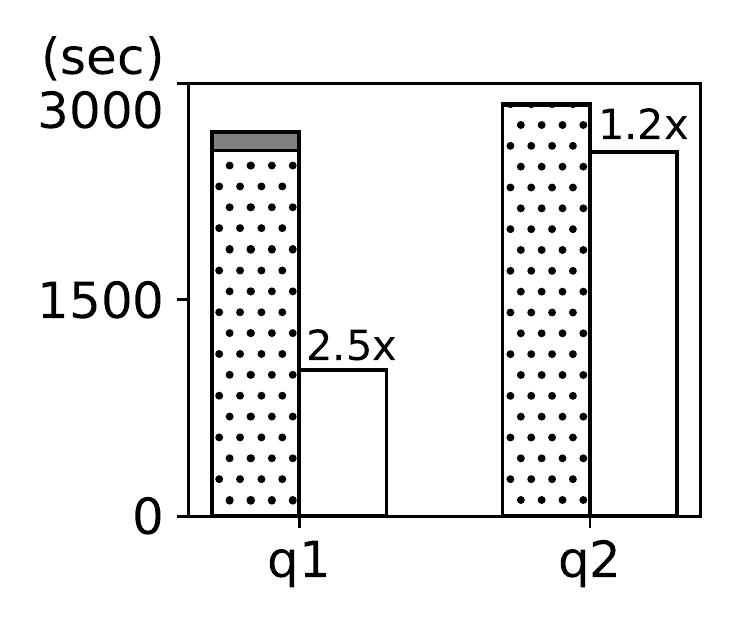}  
          \caption{\pagerads}
        \end{subfigure} %
        \\
        \begin{subfigure}{.4\textwidth}
          \centering
          \includegraphics[width=\linewidth]{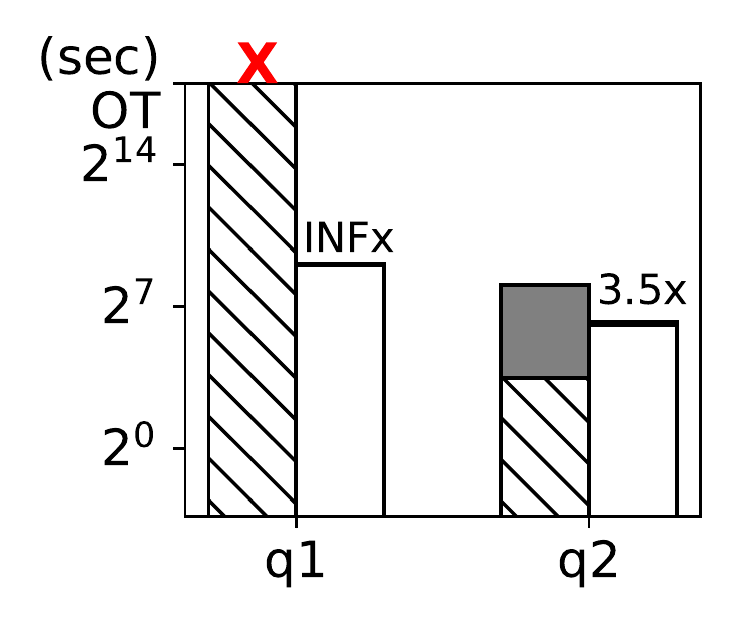}  
          \caption{\pageseed}
        \end{subfigure} %
        ~
        \begin{subfigure}{.4\textwidth}
          \centering
          \includegraphics[width=\linewidth]{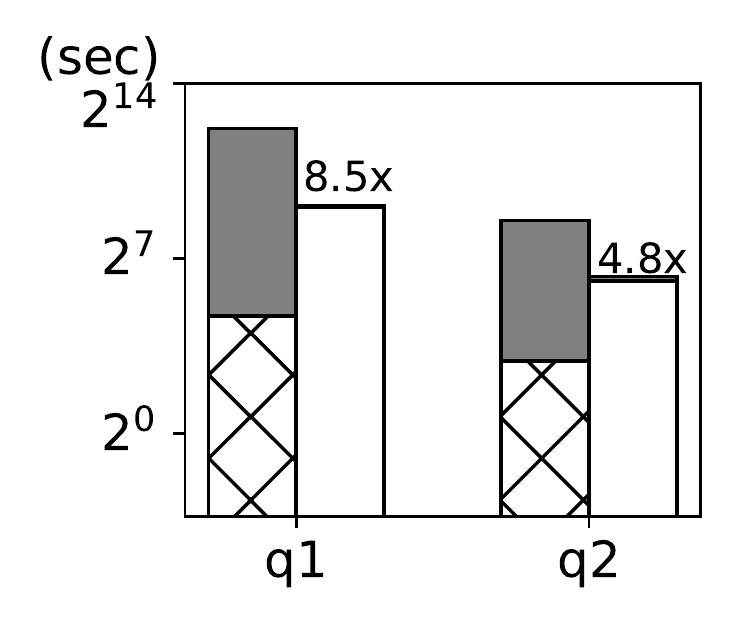}  
          \caption{\pagewco}
        \end{subfigure} %
        \caption{Speed Up Existing Algorithms}
        \label{fig:speed-up}
    \end{minipage} %
    ~
    \begin{minipage}{.67\textwidth}
        \begin{subfigure}{0.33\textwidth}
          \centering
          \includegraphics[width=\linewidth]{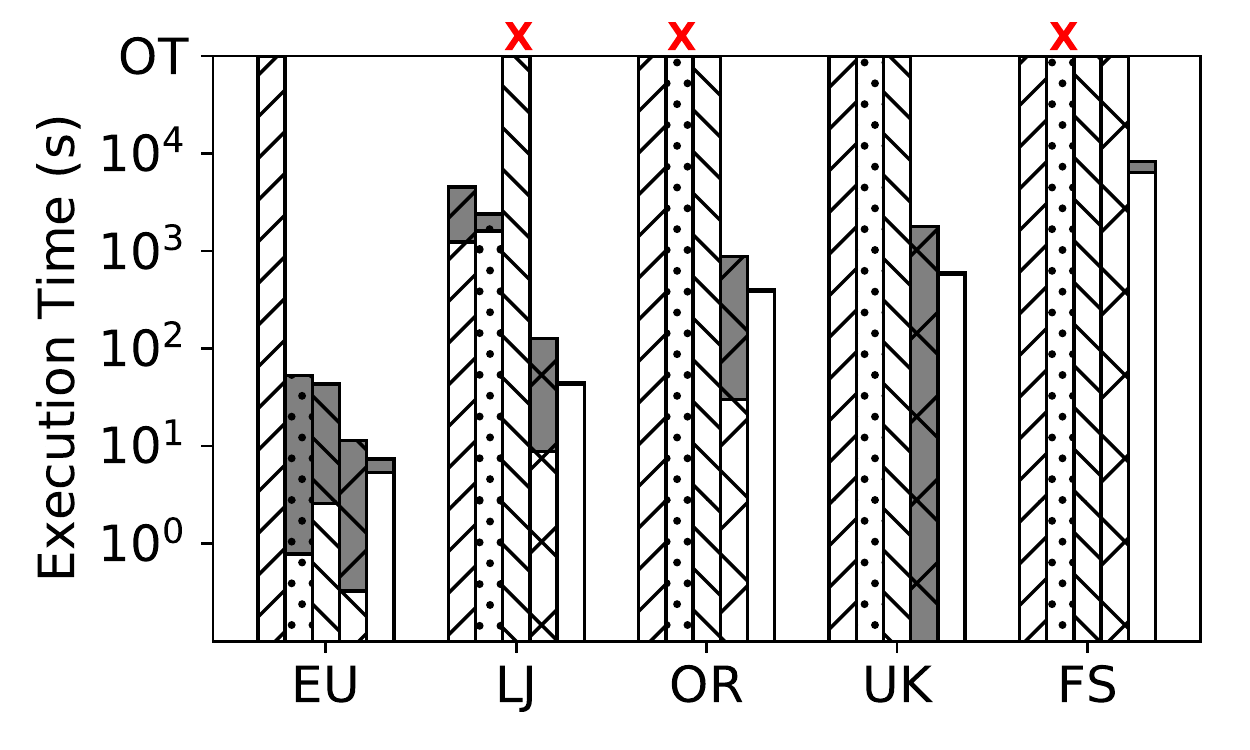}  
          \caption{$q_1$}
        \end{subfigure} %
        ~
        \begin{subfigure}{0.33\textwidth}
          \centering
          \includegraphics[width=\linewidth]{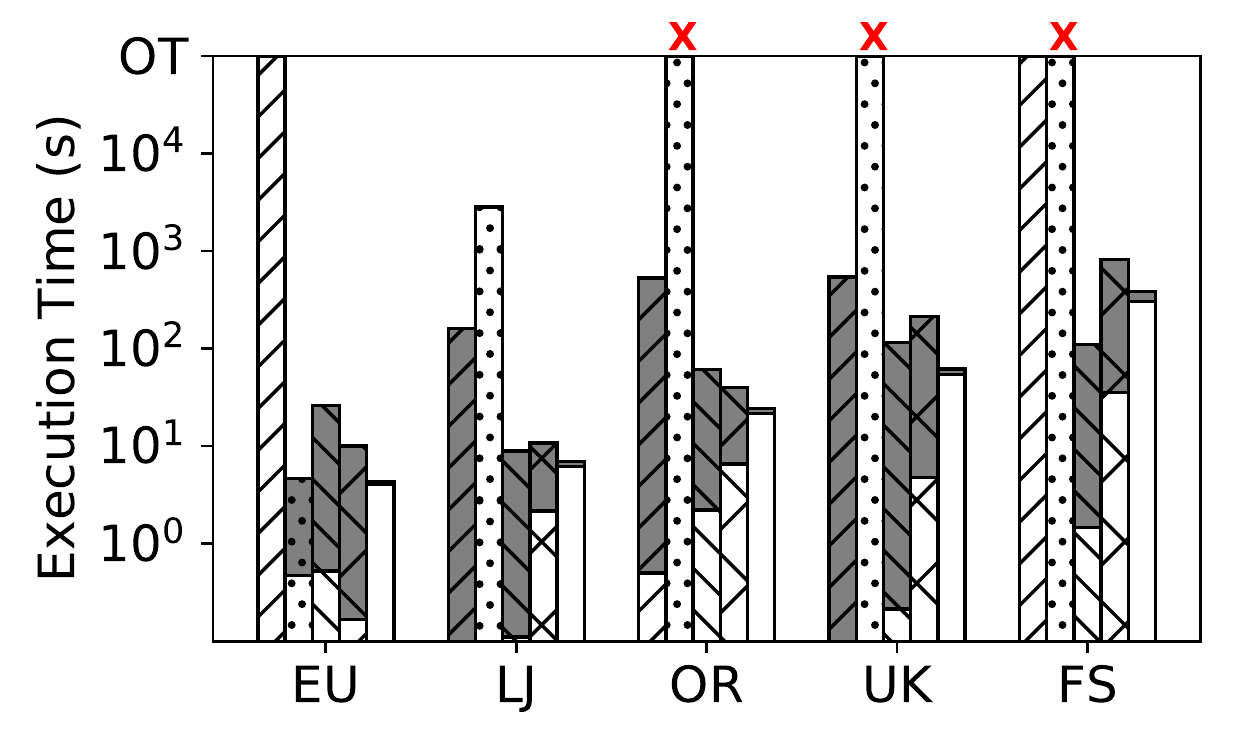}  
          \caption{$q_2$}
        \end{subfigure} %
        ~
        \begin{subfigure}{0.33\textwidth}
          \centering
          \includegraphics[width=\linewidth]{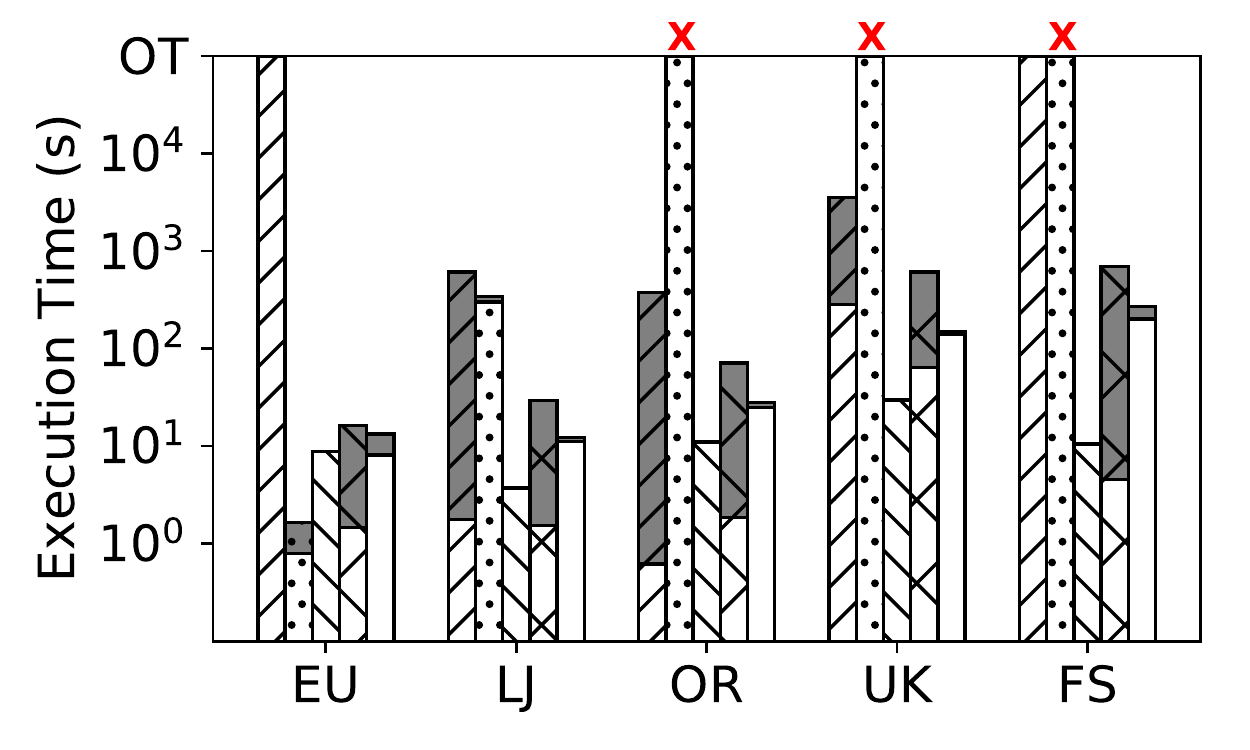}  
          \caption{$q_3$}
        \end{subfigure} %
        \\
        \begin{subfigure}{0.33\textwidth}
          \centering
          \includegraphics[width=\linewidth]{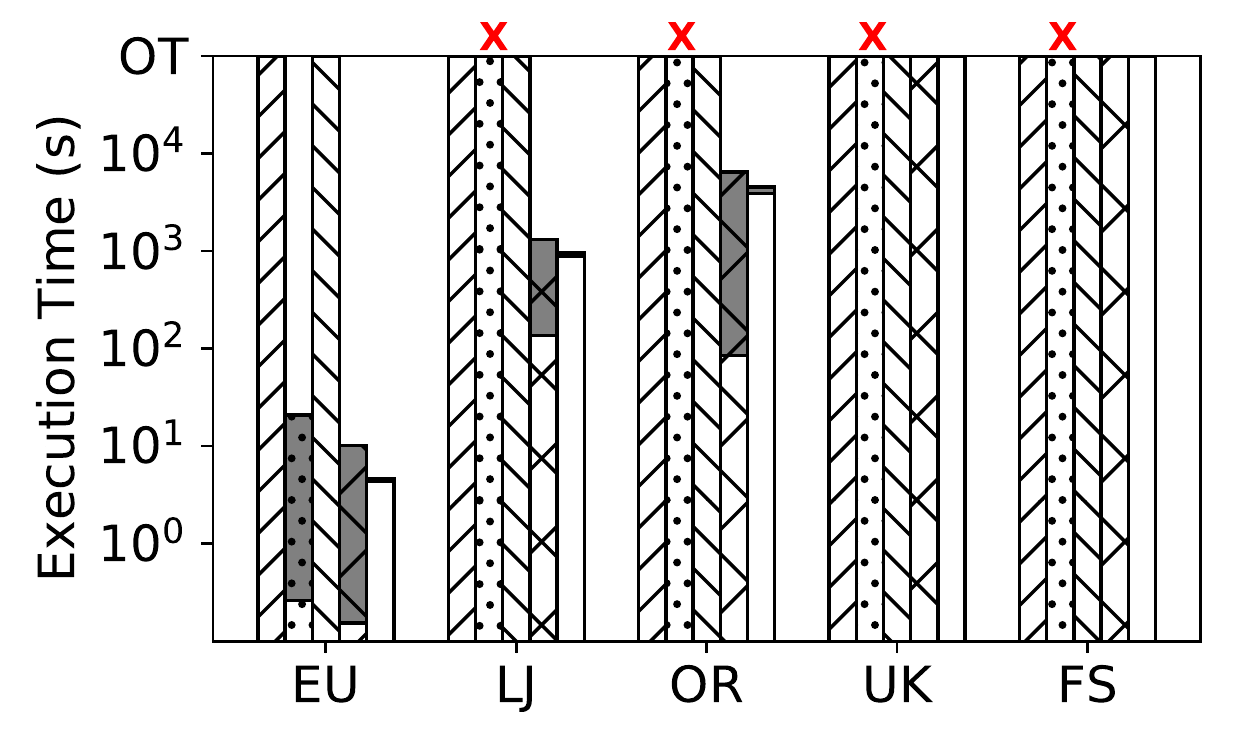}  
          \caption{$q_4$}
        \end{subfigure} %
        ~
        \begin{subfigure}{0.33\textwidth}
          \centering
          \includegraphics[width=\linewidth]{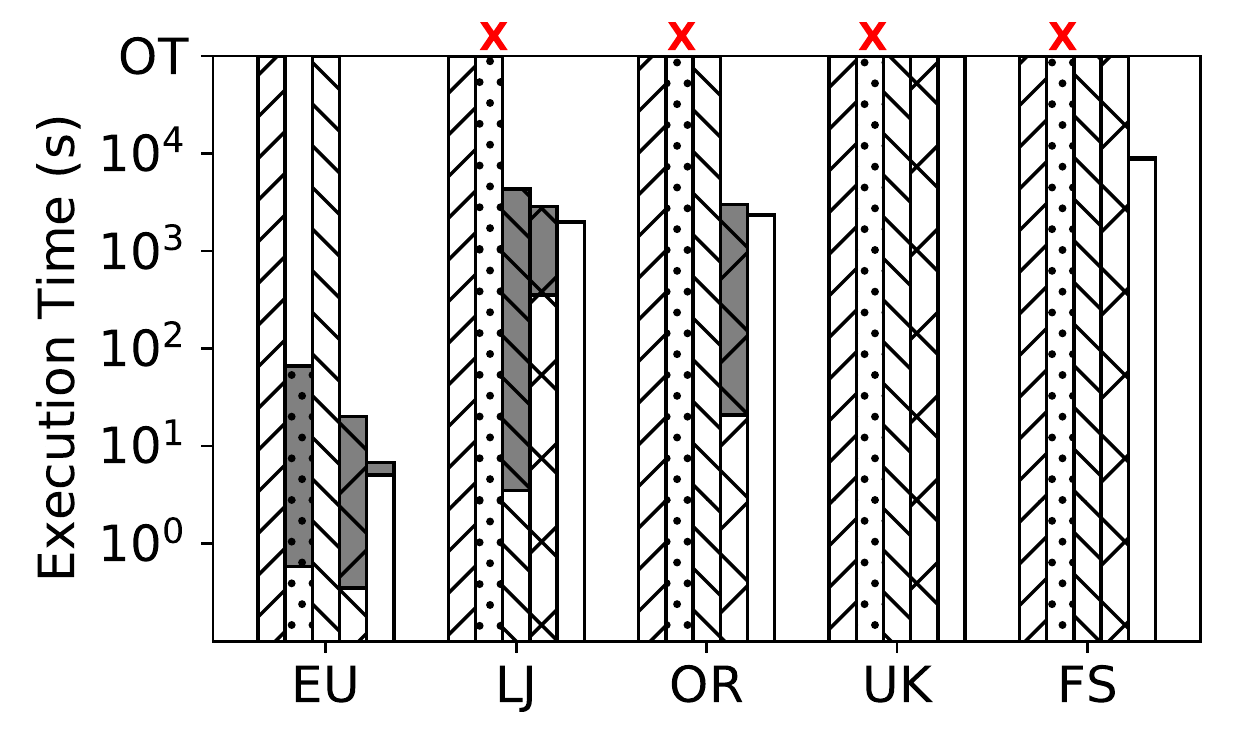}  
          \caption{$q_5$}
        \end{subfigure} %
        ~
        \begin{subfigure}{0.33\textwidth}
          \centering
          \includegraphics[width=\linewidth]{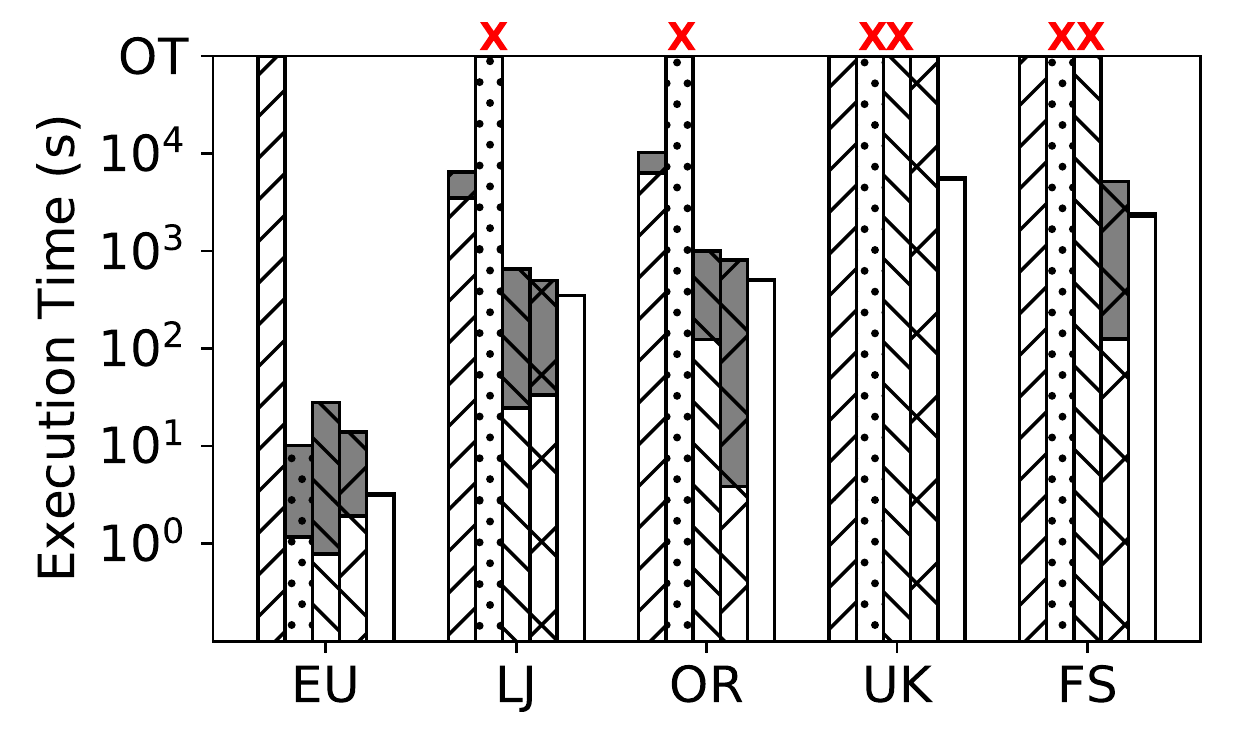}  
          \caption{$q_6$}
        \end{subfigure} %

        \caption{All-Round Comparisons.}
        \label{fig:all-round}
    \end{minipage}
\end{figure*}

\stitle{Exp-1: Speed Up Existing Algorithms.} We first verify that existing works can be readily plugged into \page via their logical plans to receive automatic speedup and bounded-memory execution (\refrem{plug-in-exist}). We run the logical plans of \benu, \rads, \seed, and \bigjoin in \page, denoted as \pagebenu, \pagerads, \pageseed, and \pagewco, respectively. While \seed's plan may include clique as the join unit, we let \pageseed compute the clique via \extend instead of building the costly triangle index. Note that we use \dtlj instead of \dtuk for \rads and \pagerads, where both of them run \timeout on \dtuk because of the poor execution plan of \rads. The results of $q_1$ and $q_2$ are presented in \reffig{speed-up}, with the speedup factor marked on top of each bar.

For \benu, the huge overhead of using Cassandra makes it significantly slower than \pagebenu. 
For \rads, 
the speedup is less significant, mainly due to the poor execution plans of \rads, especially for $q_2$, where a massive number of 3-stars must be materialized. 
\seed runs \oom for $q_1$, while \pageseed completes in 544 seconds because it processes the join via the more efficient pulling-based \wopt join according to \refeq{settings}.
Note that although \seed replies on the triangle index for querying $q_2$, our index-free \pageseed still achieves a speedup of 2.5$\times$. Lastly, \pagewco outperforms \bigjoin by 8.5$\times$ and 4.8$\times$ on $q_1$ and $q_2$, with less memory usage (e.g. 4GB vs 12GB for $q_1$). Specifically, \pagewco reduces the communication time by 764$\times$ and 115$\times$, respectively, thanks to the efficient \extend operator.  

\stitle{Exp-2: All-round Comparisons.} We compare \page (with optimal execution plan by \refalg{dp-opt}) on $q_1$-$q_6$ with the state-of-the-art algorithms using different data graphs in this experiment (\reffig{all-round}). Among all cases, \page has the highest completion rate of 90\%, where \bigjoin, \seed, \rads, and \benu complete 80\%, 50\%, 30\%, and 30\%, respectively.
Computation-wise, \page outperforms \rads by 54.8$\times$, \benu by 53.3$\times$, \seed by 5.1$\times$, and \bigjoin by 4.0$\times$ on average. Note that with the costly triangle index, \seed can query $q_3$ (a clique) without any join, while the index-free \page only runs slightly slower for this query. 
Communication-wise, the communication time of \page takes only a very small portion (the shaded area in a bar) in all cases, due to a combination of caching, batching RPC requests, and good execution plan. In comparison, we can observe that all other algorithms (especially join-based algorithms) spend a notable portion of time communicating data in most cases. 
Memory-wise, due to the BFS/DFS-adaptive scheduling technique, \page keeps the memory usage bounded, and the peak memory usage is 16.6GB among all cases, compared to $>$64GB (\oom), 2.3GB, $>$64GB, 34.1GB for \rads, \benu, \seed and \bigjoin, respectively. This experiment shows that \page can perform scalable and efficient subgraph enumeration while taking into consideration of computation, communication and memory management.

\begin{table}[h]
\small
\setstretch{0.9}
\caption{Throughput on \dtcw}
\begin{tabular}{|c|c|c|c|}
\hline
& $q_1$ & $q_2$ & $q_3$ \\ \hline \hline
Throughput & 2,895,179,286/s  &  354,507,087,789/s  &  206,696,071/s   \\ \hline
\end{tabular}
\label{tab:web-scale}
\end{table}

\stitle{Exp-3: Web-scale Data Graph.} We run \page over the web-scale graph \dtcw on the AWS cluster to test its ability in handling large graphs. The data graph has a raw size of about 370GB (in CSR format) which is larger than the configured memory of the machine. \benu fails to load the graph into Canssandra within one day, so as \seed that needs to build the triangle index. Both \rads and \bigjoin run \oom quickly even when we start with one single vertex in a region group (batch). However, \page, runs the queries with a stable memory usage of around 85G when setting the cache capacity size to 30GB and the output queue size to $5\times10^8$. The number of results on this graph has been estimated to be dramatically large \cite{wco-join}. Therefore, we run each query for 1 hour and report the average throughput ($\frac{|\mathbb{R}|}{3600}$) of \page in \reftable{web-scale}. The authors of \bigjoin \cite{wco-join} have used an incremental dataflow to avoid overflowing the memory. In a same-scale cluster (the machine has similar configurations), they obtain the throughput of 26,681,430/s and 46,517,875/s for $q_1$ and $q_3$, which is much lower than our results. 

\subsection{The Design of \page}

\begin{figure}[h]
    \centering
    \begin{subfigure}[b]{.5\textwidth}
    \centering
    \includegraphics[height = 0.08in]{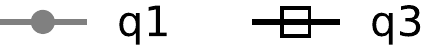}
    \end{subfigure}%
    \\
    \begin{subfigure}{.166\textwidth}
      \centering
      \includegraphics[width=\linewidth]{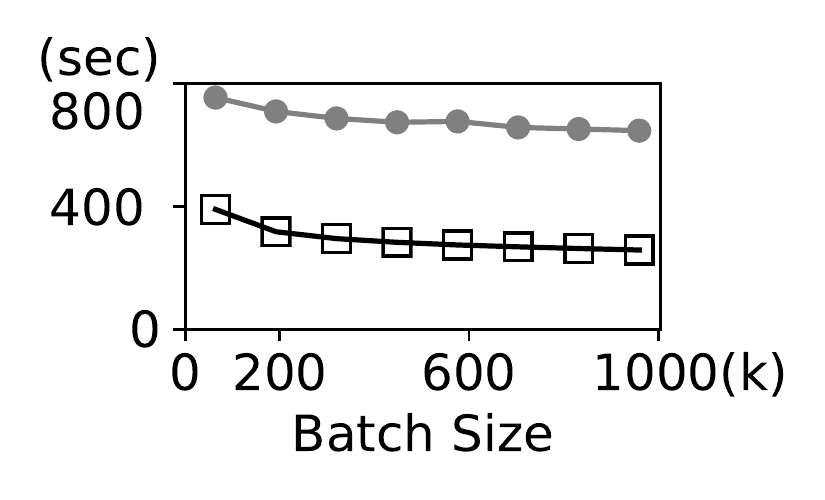}  
      \vspace{-2\baselineskip}
      \caption{Execution Time}
    \end{subfigure} %
    ~
    \begin{subfigure}{.166\textwidth}
      \centering
      \includegraphics[width=\linewidth]{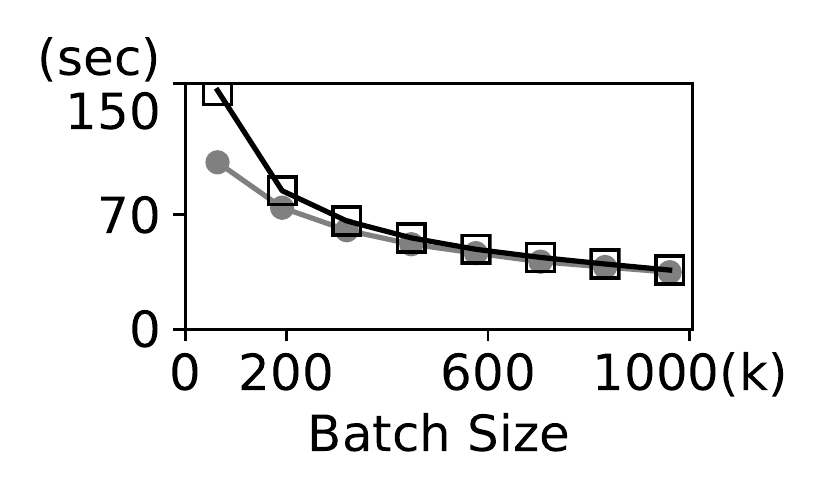}  
      \vspace{-2\baselineskip}
      \caption{Communication Time}
    \end{subfigure} %
    ~
    \begin{subfigure}{.166\textwidth}
      \centering
      \includegraphics[width=\linewidth]{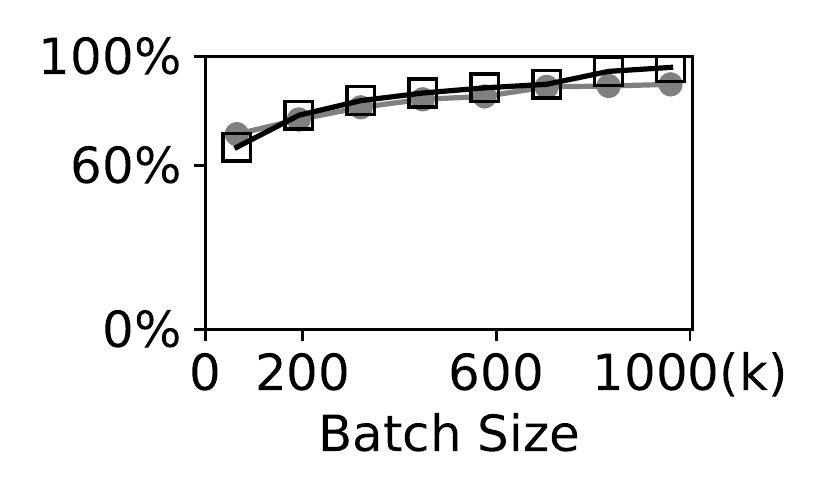}  
      \vspace{-2\baselineskip}
      \caption{Network utilisation}
    \end{subfigure} %
    
    \caption{Vary Batch Size}
    \label{fig:exp-batch}
    \vspace*{-.8\baselineskip}
\end{figure}

\stitle{Exp-4: Effectiveness of Batching.} We use a batch of data as the minimum data processing unit (\refsec{arch}). We investigate how batching affects the \page's performance by varying the batch sizes with cache disabled. We report the results 
of $q_1$ and $q_3$ in \reffig{exp-batch}. Let the size of data transferred via network be $C$ (in GB). We measure the network utilisation as $\frac{8C / T_C}{10}$ (10Gbps is the network bandwidth).
Increasing the batch size reduces execution and communication time. This is because \page's two-stage execution strategy can efficiently aggregate RPC requests within a single batch to improve network utilisation. The average network utilisation starts with $71\%$ when the batch size is 100K, and arrives at $86\%$ and $94\%$ when the size is 512K and 1024K, respectively. As larger batch can make the cache and the output queue more easily overflowed, we set the default batch size to 512K (with satisfactory network utilisation).


\begin{figure}[h]
    \centering
    \begin{subfigure}[b]{.5\textwidth}
        \centering
        \includegraphics[height = 0.08in]{figures/exp/batch-legend.pdf}
    \end{subfigure}%
    \\
    \begin{subfigure}{.166\textwidth}
      \centering
      \includegraphics[width=\linewidth]{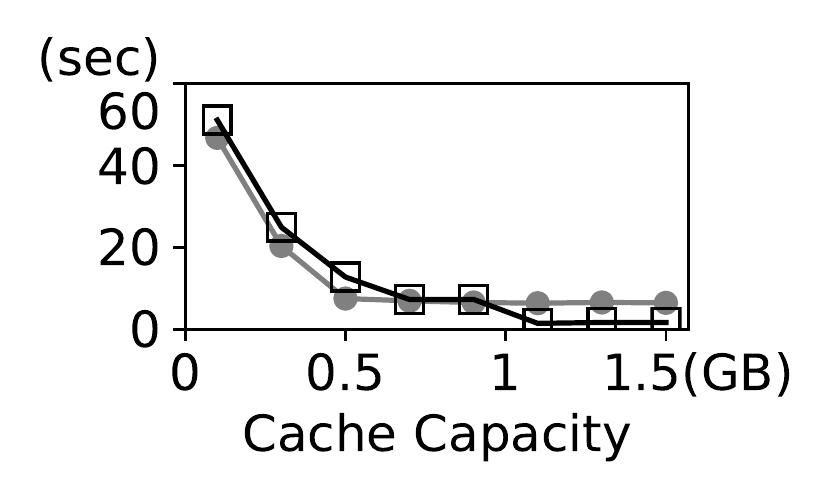}  
      \vspace{-2\baselineskip}
      \caption{Communication Time}
    \end{subfigure} %
    ~
    \begin{subfigure}{.166\textwidth}
      \centering
      \includegraphics[width=\linewidth]{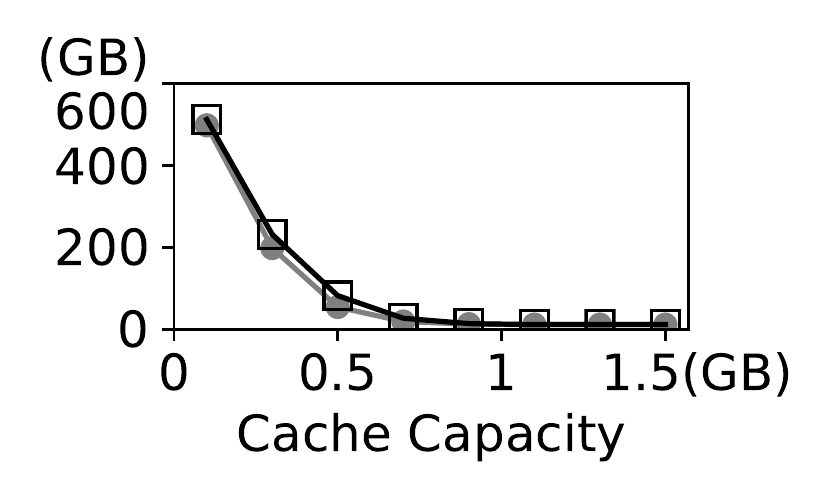}  
      \vspace{-2\baselineskip}
      \caption{Communication Size}
    \end{subfigure} %
    ~
    \begin{subfigure}{.166\textwidth}
      \centering
      \includegraphics[width=\linewidth]{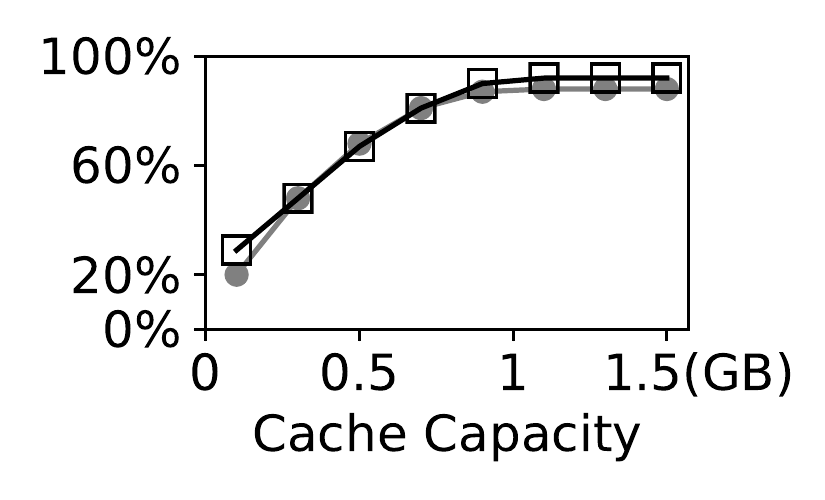}  
      \vspace{-1.8\baselineskip}
      \caption{Hit Rate}
    \end{subfigure} %
    
    \caption{Vary Cache Capacity}
    \label{fig:cache-cap}
\end{figure}

\stitle{Exp-5: Cache Capacity.} We evaluate the impacts of cache capacity on query performance in \reffig{cache-cap}, varying the cache capacity from 0.1GB to 1.5GB. 
As the capacity increases, the communication time and size decrease rapidly. For $q_1$, growing the cache capacity from 0.1GB to 0.5GB increases the average hit rate by about 3.5 times, and reduces the total communication by almost 10 times. The performance does not change after the cache capacity exceeds 1.1GB 
for both queries, whose capacity is enough to hold all remote vertices accessed in these two queries. 


\begin{table}[h]
\small
\setstretch{0.9}

\caption{Vary Cache Design}
\begin{tabular}{|c|c|c|c|c|c|}
\hline
      & \lrbu            & \lrbu-Copy  & \lrbu-Lock  & \lru-Inf & Cncr-\lru \\ \hline \hline
$q_1$ & \textbf{589.3s} (27.7s)  & 734.1s    & 920.1s  & 997.5s   & 2597.1s  \\ \hline
$q_2$ & \textbf{63.3s} (3.7s)   & 74.5s     & 98.0s   & 107.7s   & 240.5s    \\ \hline
$q_3$ & \textbf{200.6s} (24.8s)  & 314.5s    & 525.4s  & 563.4s   & 980.9s   \\ \hline
\end{tabular}
\label{tab:cache-design}
\end{table}

\stitle{Exp-6: Cache Design} 
We evaluate the benefit of \lrbu's lock-free and zero-copy cache design (\reftable{cache-design}). While enabling the two-stage execution strategy (\refsec{pull}), we first compare the performance of \page with \lrbu, \lrbu-Copy, \lrbu-Lock and \lru-Inf, which represent our \lrbu cache, the \lrbu cache with memory copy enforced, the \lrbu cache with both memory copy and lock enforced, and a \lru cache with infinite capacity\footnote{The official Rust LRU library (\url{https://doc.rust-lang.org/0.12.0/std/collections/lru_cache}) is used, with the capacity set to the maximum integer.}, respectively. 
\lrbu outperforms \lrbu-Copy, \lrbu-Lock and \lru-Inf by 1.3$\times$, 1.9$\times$ and 2.0$\times$, respectively, which reveals the effectiveness of the zero-copy and lock-free cache access. 
To evaluate the two-stage execution strategy, we further compare \lrbu with a variant called Cncr-\lru, which disables two-stage execution, and applies advanced concurrent \lru-cache \cite{cache_io,caffeine} for resolving data contentions. The performance gain of \lrbu over Cncr-\lru is 4.4$\times$ on average. The two-stage execution may bring in synchronisation cost, which is hard to measure directly. Alternatively, as indicated in the bracket of \lrbu in \reftable{cache-design}, we measure the whole time spent on the fetch stage $t_f$, knowing that it must contain the time for synchronisation. Observe that $t_f$ is merely about 7.5\% of the total execution time on average, the synchronisation overhead must thus be small.



\begin{figure}[h]
    \centering
      \includegraphics[width=0.4\linewidth]{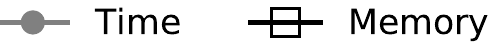} \\
      \includegraphics[width=0.5\linewidth]{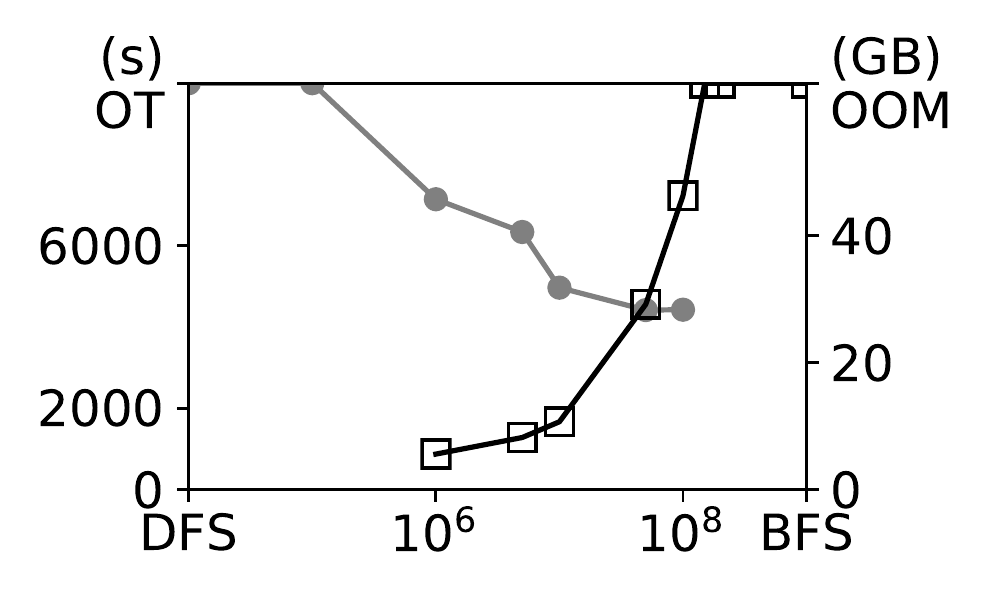}  
      \caption{Dynamic Scheduling}
      \label{fig:exp3-scheduling}
\end{figure}

\stitle{Exp-7: Scheduling.} We evaluate the BFS/DFS-adaptive scheduling using a long-running query $q_6$ that can trigger memory crisis.
By varying the output queue size for each operator from $0$ to infinite, \page's scheduler essentially transforms from DFS scheduler, to BFS/DFS-adaptive scheduler, and eventually to BFS scheduler. \reffig{exp3-scheduling} shows the execution time and memory consumption for different queue sizes. When the queue size is smaller than $10^6$ (including pure DFS scheduling), the query runs \timeout. As the size increases, the execution time decreases until $10^7$, from which the curve gets flat. The execution time at the point $5{\times}10^7$ is 38\% faster than that at $1{\times}10^6$. After the size goes beyond $10^8$ (including BFS-style scheduling), the program encounters \oom. The results indicate that \page's adaptive scheduler keeps the memory usage bounded while achieving high efficiency. 

\begin{figure}[h]
    \centering
    \includegraphics[width=0.8\linewidth]{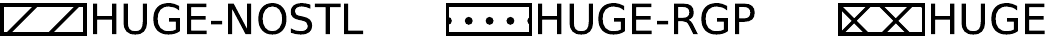}
    \\
    \begin{subfigure}{.25\textwidth}
      \centering
      \includegraphics[width=\linewidth]{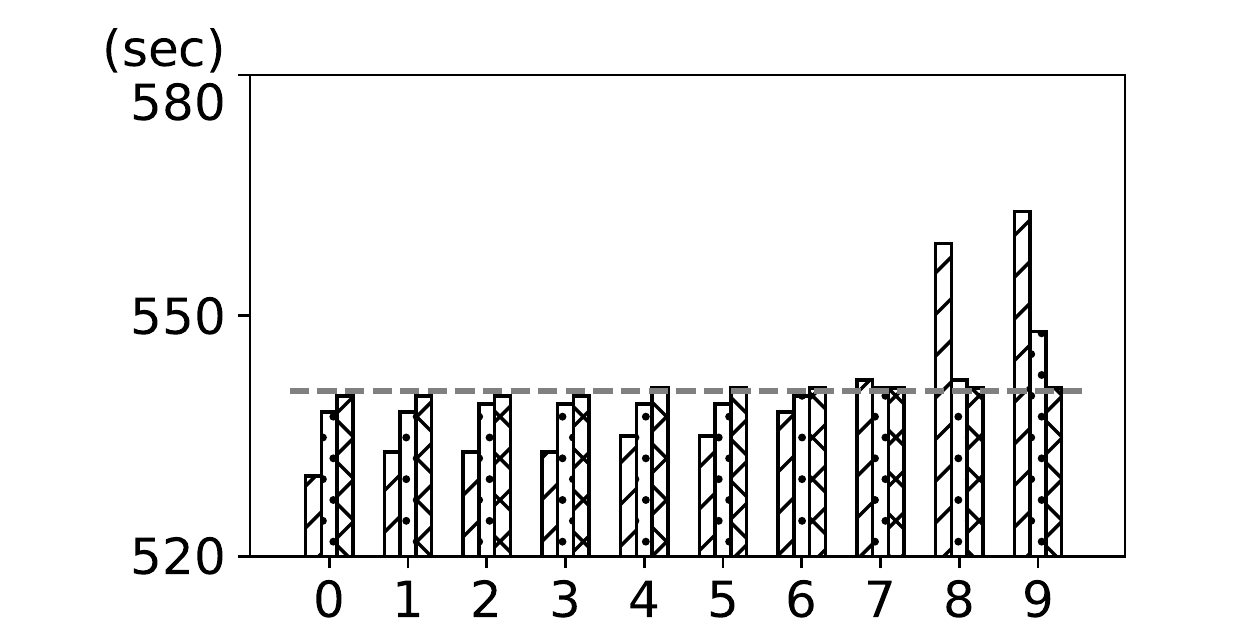}  
      \caption{$q_1$}
    \end{subfigure}
     ~
    \begin{subfigure}{.25\textwidth}
      \centering
      \includegraphics[width=\linewidth]{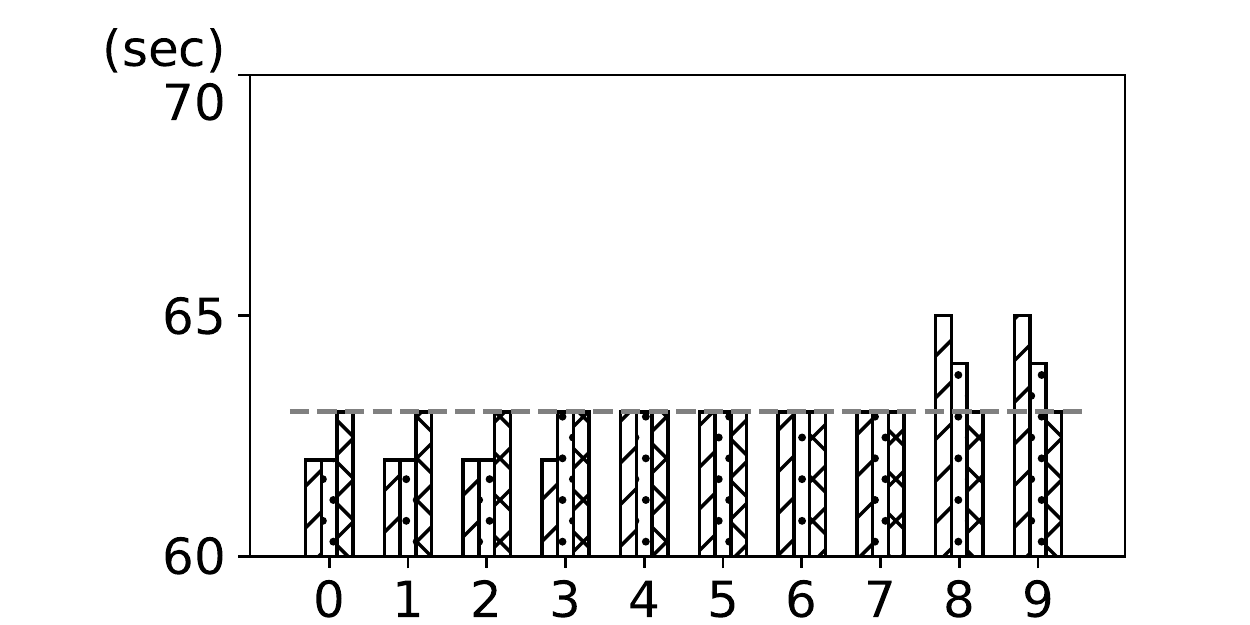} 
      \caption{$q_2$}
    \end{subfigure}
    \\
        \begin{subfigure}{.25\textwidth}
      \centering
      \includegraphics[width=\linewidth]{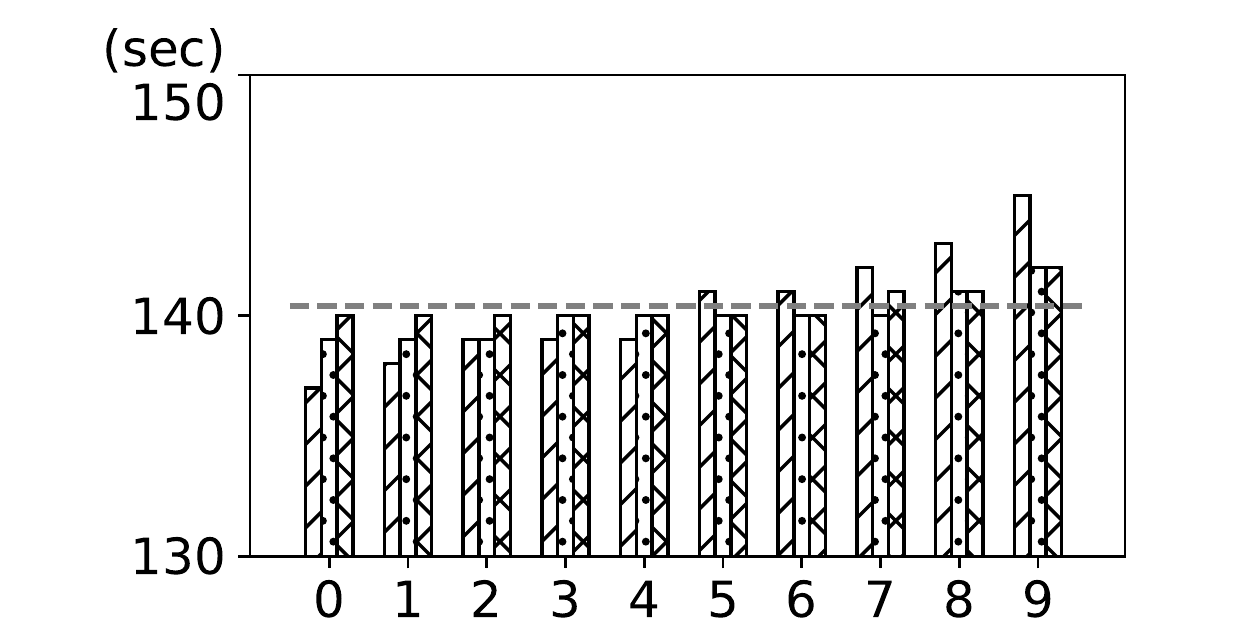}  
      \caption{$q_3$}
    \end{subfigure}
     ~
    \begin{subfigure}{.25\textwidth}
      \centering
      \includegraphics[width=\linewidth]{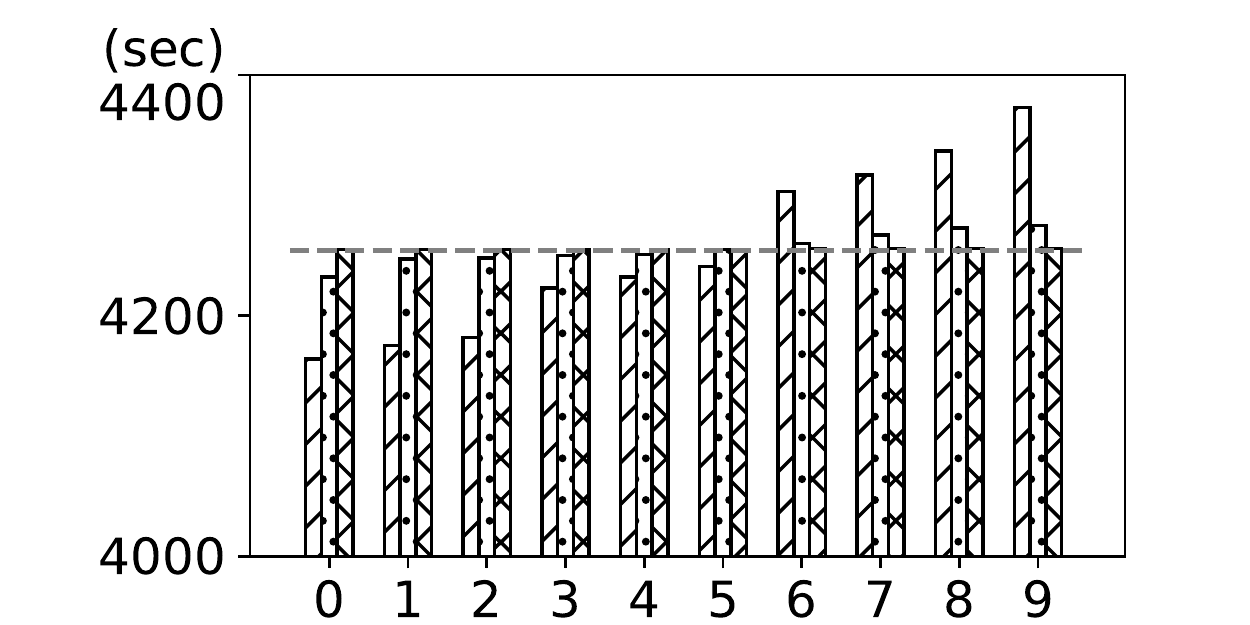} 
      \caption{$q_6$}
    \end{subfigure}

    \caption{Work Stealing}
    \label{fig:work-stealing}
\end{figure}

\stitle{Exp-8: Load balancing.} To test the effectiveness of our work-stealing technique, we compare \page with \page-NOSTL (\page with work stealing disabled, which distributes the load on the pivot vertex as \benu) and \page-RGP (with the region-group technique of \rads).
The results are shown in \reffig{work-stealing} (other queries run \timeout).  
We measure the standard deviation of the execution time among all workers. 
Take $q_6$ as an example, 
with the help of work stealing, \page demonstrates the best load balancing with a standard deviation of only $0.5$, compared to \page-NOSTL's $73.4$ and \page-RGP's $13.2$, which can also be observed from \reffig{work-stealing}. We then measure the overhead of our work-stealing technique by computing $Total$ as the aggregated CPU time among all workers. Compared to \page-NOSTL, \page only slightly increases $Total$ by $0.017\%$.


\begin{table}[h]
\caption{Runtime for Different Execution Plans}
\label{tab:execution-plans}
\small
\setstretch{0.9}
\begin{tabular}{|c|c|c|c|c|}
\hline
 & \pagewco  & \pageeh & \pagegf & \page    \\ \hline\hline
$q_7$ & \timeout  & \multicolumn{3}{c|}{\textbf{7340.28s (170.02s)}}    \\ \hline
$q_8$ & $64.5s (21ms)$ & 67.2s (15.6s)  & 64.4s (13.9s) & \textbf{40.1s (6.5s)}    \\ \hline
\end{tabular}
\end{table}

\stitle{Exp-9: Comparing Hybrid Plans.} We plug into \page the logical plans of \wopt join (as \pagewco), as well as the hybrid plans of \eh (\pageeh) and \gf (\pagegf), and compare them with \page (with the plan by \refalg{dp-opt}). We use queries $q_7$ and $q_8$ for their variances in execution plans, and the graph \dtgo to avoid too many \timeout cases.
For $q_7$, the optimiser of \page produces the same logical plan as \eh and \gf that joins a 3-path with a 2-path (via \join operator), which is better than the \wopt join plan that must produce the matches of a 4-path. For $q_8$, \page's optimiser, \eh and \gf all generate their own hybrid plans, while \page's plan renders better performance. This is because that \page's optimiser takes both computation and communication into consideration, while existing hybrid plans are developed in the sequential context where computation is the only concern (\refex{optimal-plan}).

\begin{figure}[h]
    \centering
    \begin{subfigure}[b]{0.5\textwidth}
          \centering
          \includegraphics[height = 0.08in]{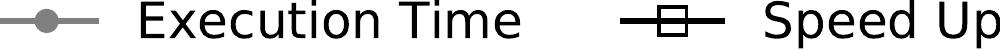}
    \end{subfigure}%
    \\
    \begin{subfigure}{.2\textwidth}
          \centering
          \includegraphics[width=\linewidth]{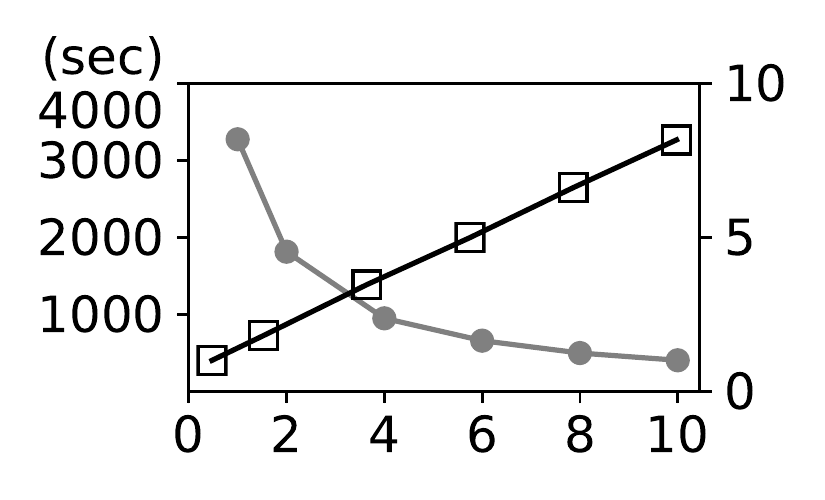}  
          \vspace{-2\baselineskip}
          \caption{\page $q_2$}
    \end{subfigure} 
    ~
    \begin{subfigure}{.2\textwidth}
          \centering
          \includegraphics[width=\linewidth]{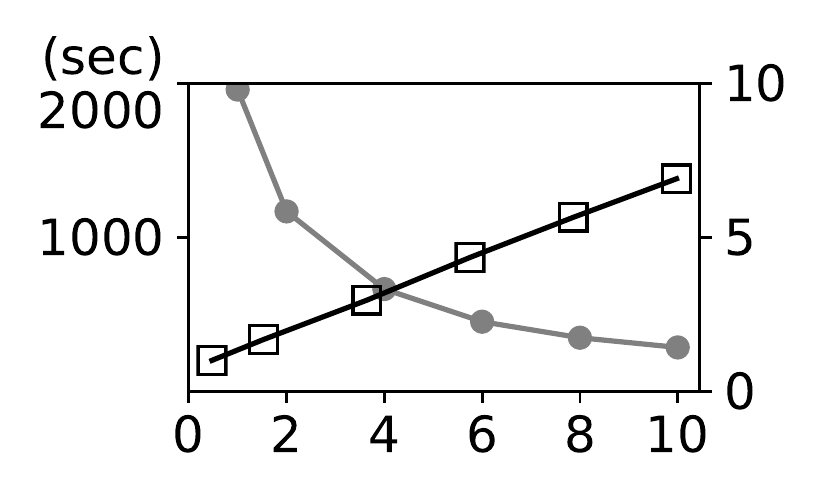}  
          \vspace{-2\baselineskip}
          \caption{\page $q_3$}
    \end{subfigure}
    \\
    \begin{subfigure}{.2\textwidth}
          \centering
          \includegraphics[width=\linewidth]{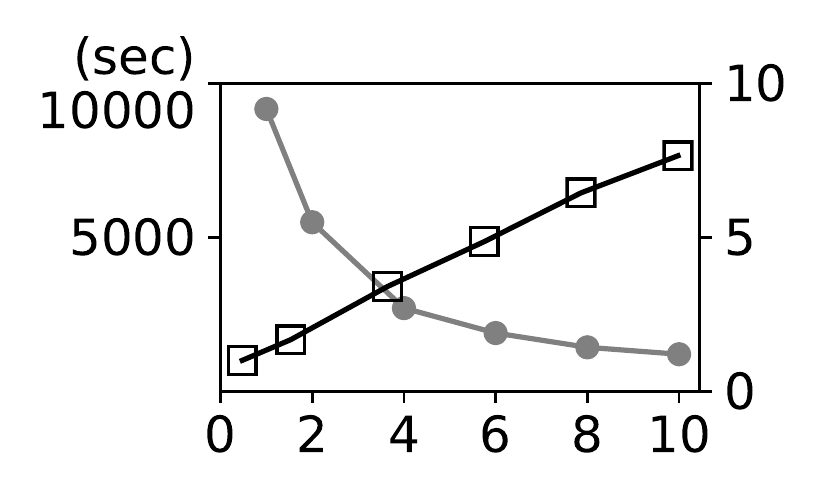} 
          \vspace{-2\baselineskip}
          \caption{\bigjoin $q_2$}
    \end{subfigure} 
    ~
    \begin{subfigure}{.2\textwidth}
          \centering
          \includegraphics[width=\linewidth]{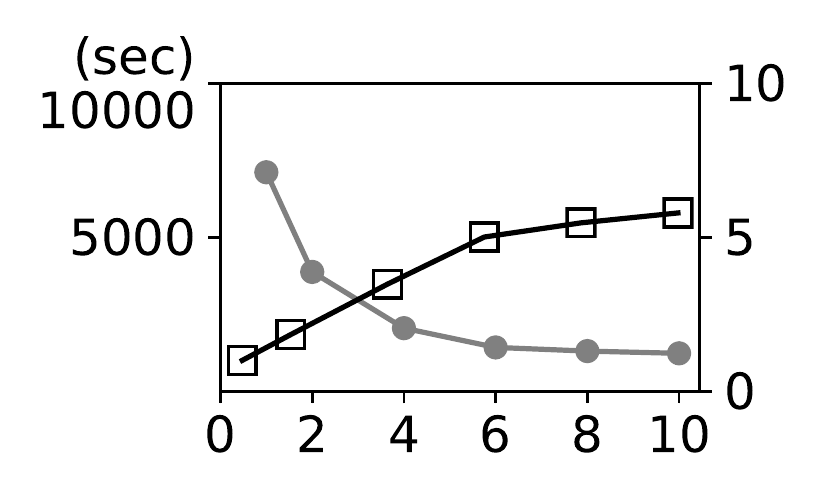}  
          \vspace{-2\baselineskip}
          \caption{\bigjoin $q_3$}
    \end{subfigure}

    \caption{Vary Number of Machines.}
    \label{fig:exp5-scalability}
\end{figure}

\stitle{Exp-10: Scalability.} We test the scalability of \page by varying the machine number in the cluster from 1 to 10 using the bigger data graph \dtfs (\reffig{exp5-scalability}). \page demonstrates almost linear scalability for both queries. 
Additionally, we compare the scalability of \page with \bigjoin (the \timeout results of \benu, \rads and \seed are excluded). \page scales better than \bigjoin, with the average scaling factor of $7.5\times$ compared to \bigjoin's $6.7\times$ from 1 to 10 machines. 

\section{Related Work}
\label{sec:related}

\stitle{Single-machine Enumeration} Many sequential algorithms are proposed, mainly for labelled graphs. Most of them follows Ullmann's \cite{ullmann} backtracking approach \cite{comparison,match-survey} with different matching order, pruning rules, and index structure 
\cite{vf2,vf3,quicksi,graph-ql,GADDI,spath,turbo-iso,cfl,DAF,dualsim}. 
Parallel algorithms 
\cite{PGX.ISO,RI,PSM,LIGHT,CECI,jin2021fast} 
are later proposed to compute subgraph matching using multiple threads. Similarly, \eh \cite{empty-headed} and \gf \cite{graphflow-demo,graphflow} aim at computing subgraph enumeration in parallel on a single machine mixing worst-case optimal join \cite{Ngo-join} and binary join.  
They can be seamlessly migrated to distributed environment using \page (\refsec{plan}).


\stitle{Distributed Enumeration} \multiwayjoin \cite{multiway-join} uses a one-round multiway-join to enumerate subgraphs, and QFrag \cite{Qfrag} broadcasts the data graph,
These algorithms have poor scalability for large data graphs or complex queries \cite{patmat-exp}. Another class of algorithms, including \edgejoin \cite{edge-join}, \starjoin \cite{star-join}, \psgl \cite{psgl}, \ttjoin \cite{twin-twig}, \seed \cite{seed}, \cbf \cite{crystaljoin}, and \bigjoin \cite{wco-join}, follows a join-based framework that has been empirically studies and evaluated in \cite{patmat-exp}. 
To solve the problem of shuffling huge amount of intermediate results in join-based algorithms, \cite{crystaljoin} proposed a compression technique to reduce communication cost. \benu and \rads further introduced a pull-based scheme that pull the data graph when needed instead of shuffling the intermediate results. However, they do not demonstrate satisfactory performance as illustrated in this paper.
\section{Conclusion}
\label{sec:conclusion}
In this paper, we present \page, an efficient and scalable subgraph enumeration system in the distributed context. \page incorporates an optimiser to compute an advanced execution plan, and a novel architecture that supports pulling/pushing-hybrid communication. Together with a lock-free and zero-copy cache design, and a dynamic scheduling module, \page achieves high performance in both computation and communication with bounded memory. 



\balance

\bibliographystyle{ACM-Reference-Format}
\bibliography{sample-base}


\end{document}